\newlength{\continueindent}
\newcommand*{\ALG@customparshape}{\parshape 2 \leftmargin \linewidth \dimexpr\ALG@tlm+\continueindent\relax \dimexpr\linewidth+\leftmargin-\ALG@tlm-\continueindent\relax}
\apptocmd{\ALG@beginblock}{\ALG@customparshape}{}{\errmessage{failed to patch}}
\DeclarePairedDelimiter\floor{\lfloor}{\rfloor}
\newtheorem{theorem}{Theorem}
\newtheorem{lemma}{Lemma}
\newtheorem{definition}{Definition}
\newtheorem{proposition}{Proposition}
\newenvironment{proof}[1][Proof]{\noindent\textbf{#1.} }{\ \rule{0.5em}{0.5em}}
\begin{document}
\preprint{APS/123-QED}

\title{Device-Independent Certification of Multipartite Distillable Entanglement}

\author{Aby Philip}
\affiliation{School of Applied and Engineering Physics,
Cornell University, Ithaca, New York 14850, USA}
\author{Mark M. Wilde}
\affiliation{School of Electrical and Computer Engineering and School of Applied and Engineering Physics,
Cornell University, Ithaca, New York 14850, USA}

\begin{abstract}
    Quantum networks consist of various quantum technologies, spread across vast distances, and involve various users at the same time. Certifying the functioning and efficiency of the individual components is a task that is well studied and widely used. However, the power of quantum networks can only be realized by integrating all the required quantum technologies and platforms across a large number of users. In this work, we demonstrate how to certify the distillable  entanglement available in multipartite states produced by quantum networks, without relying on the physical realization of its constituent components. We do so by using the paradigm of device independence.
\end{abstract}

\maketitle

\tableofcontents

\section{Introduction}

Quantum networks may become a reality within the near future and have the potential to open up several avenues of applications. Potential uses for quantum networks include quantum computers connected together for distributed computing tasks~\cite{dhondt2006computational,PhysRevLett.86.5188}, a collection of quantum sensors that implement a joint measurement on a system of interest~\cite{PhysRevA.86.014301,PhysRevA.97.042337,PhysRevA.100.022316,PhysRevA.103.L030601}, or a number of distant nodes that transmit quantum states among themselves~\cite{PhysRevA.59.1829,zhu2015w,MurtaQCKA}. To realize the full potential of quantum networks, the efficient production and distribution of multipartite entanglement is essential~\cite{PhysRevA.98.052313}.

To distribute multipartite entanglement over large distances, one can choose two basic approaches. The first approach involves distributing bipartite entanglement between each of the nodes and then using local operations and classical communication to convert the global state into a desired multipartite state~\cite{PhysRevLett.125.080502,PhysRevLett.109.150403}. The alternate method involves producing the multipartite state of interest at a single location~\cite{PhysRevLett.132.130604} and distributing the resulting state via quantum channels to the intended recipients. Whichever procedure one might employ, it is important to study the associated success probability and quality of entanglement yielded by these procedures. 

This point raises the question: how can we certify or guarantee that a minimum amount of entanglement is produced by a certain procedure or method? This question becomes more interesting when we recall that quantum networks may have components based on different architectures: for example, superconducting qubits for computing~\cite{PhysRevResearch.2.013004,zhong2021deterministic,PhysRevA.91.012333}, trapped ions~\cite{PhysRevLett.124.110501} or solid-state qubits~\cite{multimode_Q_network} for memory, photons for communication~\cite{PhysRevLett.132.130604}, and so on. Hence, we need to adopt a framework of certification that is independent of the constituent parts of the quantum network, thus bringing us to device-independent (DI) protocols. 

The general interest in DI protocols~\cite{Primaatmaja2023securityofdevice,zapatero2023advances} stems from the fact that one can test and verify the amount of entanglement in a state using only classical inputs and outputs, along with classical communication. DI protocols are based on various self-testing results~\cite{MayersYao04,PhysRevA.90.042340,matthew2014selftest,PhysRevLett.117.070402}, the first of which were developed in~\cite{MayersYao04}. (See \cite{Supic2020selftestingof} for a review.) Most importantly, such protocols are agnostic to the underlying mechanics and specifics of the state being tested. As such, and relevant to the present paper, DI protocols are suitable for the certification of multipartite entanglement in quantum networks, especially ones that contain elements with dissimilar underlying technologies. 

There has been a lot of recent interest in device-independent protocols involving more than two parties, including investigations into Bell-type inequalities~\cite{PhysRevA.109.022205,Le2023quantumcorrelations} and their application toward DI conference key agreement, involving achievable rates~\cite{ribeiro2018fully,holz2020genuine} and upper bounds on key-agreement rates~\cite{Philip2023multipartite,PhysRevA.105.022604}. However, little attention has been paid toward the device-independent certification of multipartite distillable entanglement. Multipartite entanglement distillation is an important step to ensure that we are using states that are as close as possible to ideal states. 

In this paper, we provide a lower bound on the certifiable rate of one-shot multipartite distillable entanglement in a device-independent setting. The multipartite Greenberger–Horne–Zeilinger (GHZ) state is the basic entanglement resource that is critical to maximizing the utility of quantum networks for applications like quantum sensing~\cite{PhysRevA.86.014301,PhysRevA.97.042337,PhysRevA.100.022316,PhysRevA.103.L030601}, multi-party quantum communication~\cite{PhysRevA.59.1829,zhu2015w,MurtaQCKA}, and distributed quantum computation~\cite{dhondt2006computational,PhysRevLett.86.5188}. It is vital to distill GHZ states from less-than-perfect multipartite states that are available to the parties in a quantum network, while using only minimal additional resources like local operations and classical communication (LOCC). To maximize the utility of every copy of an available  multipartite state, we need to look at the one-shot distillable entanglement of the state shared by the quantum network. The device-independent setting means that all the conclusions are drawn from only the classical inputs and outputs of the protocol, along with the assumption of the completeness and correctness of quantum mechanics~\cite{MayersYao04,PhysRevA.90.042340,matthew2014selftest}. In other words, the conclusions hold true regardless of the precise inner workings of the experimental devices.

In more detail, we extend DI entanglement distillation certification from the bipartite scenario~\cite{Arnon_Friedman_2019} to the multipartite scenario. In what follows, we first give a detailed description and definition of what we mean by DI multipartite entanglement distillation. In Section~\ref{sec:def_n_set}, we define a DI, $M$-partite entanglement distillation certification protocol, which consists of completeness and soundness conditions. In Section~\ref{sec:protocol}, we provide a detailed description of a proposed protocol. Our proposed protocol is centered around the Mermin--Ardehali--Belinskii--Klyshko (MABK) inequality~\cite{PhysRevA.46.5375,PhysRevLett.65.1838,AVBelinskiĭ_1993}, which is critical to proving the completeness condition. We show in Section~\ref{sec:completeness} that the proposed protocol is complete. Thereafter, in Section~\ref{sec:soundness}, we proceed to proving that it is sound, by making use of the entropy accumulation theorem~\cite{dupuis2016entropy} and the structure of the MABK~inequality~\cite{ribeiro2017MABK}.

\section{Definitions and Setup}\label{sec:def_n_set}

In this section, we provide several definitions that we use throughout the rest of the paper, including the GHZ orthonormal basis, the one-shot multipartite distillable entanglement of a multipartite quantum state, an entanglement distillation certification protocol, and basic entropies.

\begin{definition}\label{def:ghz_basis}
    The GHZ orthonormal basis for the set of $M$-qubit states is composed of the following $2^M$ states:
    \begin{equation}\label{eqn:def_ghz_basis}
        |\psi_{v,u}\rangle \coloneqq \frac{1}{\sqrt{2}}\left[|0,u\rangle +(-1)^v |1,\bar{u}\rangle\right],
    \end{equation}
    where $v\in \{0,1\}$, while $u\in\{0,1\}^{M-1}$ and $\bar{u}=\boldsymbol{1}\oplus u$ are $M-1$ bit strings, with $\boldsymbol{1}$ the all-ones bit vector of size $M-1$. 
\end{definition}

The one-shot multipartite distillable entanglement of a state quantifies the amount of multipartite entanglement that we can distill from a single copy of the state. It is defined formally as follows:

\begin{definition}
[Multipartite distillable entanglement]
\label{def:one-shot-distill} Let $n\in\mathbb{N}$ and $\varepsilon\in\lbrack0,1]$. The one-shot distillable $M$-partite entanglement $E_{D}^{\varepsilon}(\rho_{A_{[M]}})$ of a multipartite state $\rho_{A_{[M]}}\equiv\rho_{A_{1}\cdots A_{M}}$ is defined as
    \begin{multline}\label{eqn:one_shot_distillable_ent}
    E_{D}^{\varepsilon}(\rho_{A_{[M]}}) \coloneqq \\ 
        \sup_{\substack{d \in \mathbb{N},\\ \mathcal{L}\in\mathrm{LOCC}}}\left\{  \log_{2}{d}: F(\mathcal{L}_{A_{[M]}\rightarrow\hat{A}_{[M]}}(\rho),\Phi_{\hat{A}_{[M]}}^{d})\geq1-\varepsilon\right\},
    \end{multline}
where the optimization is over every LOCC channel $\mathcal{L}_{A_{[M]}\rightarrow\hat{A}_{[M]}}$, the fidelity is defined as $F(\rho,\sigma)\coloneqq \left\Vert\sqrt{\rho}\sqrt{\sigma}\right\Vert_1^2$, and $\Phi_{\hat{A}_{[M]}}^{d}\equiv |\Phi^{d}\rangle\!\langle\Phi^{d}|_{\hat{A}_{[M]}}$ is an $M$-party, rank-$d$ GHZ state, with%
    \begin{equation}
        |\Phi_{\hat{A}_{[M]}}^{d}\rangle\coloneqq \frac{1}{\sqrt{d}}\sum_{i=0}^{d-1}|{i}\rangle_{A_{1}}\cdots|{i}\rangle_{A_{M}}.
    \end{equation}
The asymptotic distillable entanglement is defined in terms of the following limit:
    \begin{equation}
        E_{D}(\rho_{A_{[M]}}) \coloneqq \inf_{\varepsilon\in(0,1)}\liminf_{n\rightarrow\infty}\frac{1}{n}E_{D}^{\varepsilon}(\rho_{A_{[M]}}^{\otimes n}).
    \end{equation}
\end{definition}

Since our goal is to quantify multipartite distillable entanglement in a device-independent setting, we want to be able to make statements about the aforementioned quantity that hold regardless of the physical systems involved. In other words, we want to bound this quantity in a device-independent setting.

Let us take a closer look at what this means. To begin with, suppose that all parties $A_{1},\ldots,A_{M}$ are given a share of a multipartite quantum state $\rho_{A_{1}\cdots A_{M}}$, which is distributed to them by a possibly unknown entity. Each party also has access to a black box with which they can interact classically. For each classical input, the corresponding black box applies a positive operator-valued measure (POVM) on its respective share of the multipartite state. After the application of the POVM, the box outputs a classical value that is recorded by the corresponding participant. The parties can use the results of the measurements to complete certain tasks of their choosing. If they use only the inputs and outputs from the black box to complete their task, they will have completed the task independent of the underlying physical realization of the measurement and states shared among them. This is the idea behind device independence. 

We now define what we mean by a device-independent (DI) protocol for certifying a rate of one-shot distillable multipartite entanglement. Our definition for a DI $M$-partite entanglement distillation certification (DIMEC) protocol is based on the definition of a DI entanglement certification protocol in \cite{Arnon_Friedman_2019}, defined therein for two parties.

\begin{definition}[DIMEC protocol]\label{def:dimec_protocol}
    Let $n\in\mathbb{N}$, let $\varepsilon_{\mathrm{smo}}$, $\varepsilon_{\operatorname{snd}}$, $\varepsilon_{\mathrm{cmp}}\in[0,1]$, and let $r \geq 0$ be a threshold $M$-partite distillation rate. Furthermore, let $\mathcal{S}^{\mathrm{honest}}$ be a set of ``honest'' states, each of which is denoted by $\phi^{\mathrm{honest}}$. Let $\mathcal{D}^{\mathrm{honest}}$ be the set of  ``honest'' measurement devices. Let $\mathcal{P}$ be a protocol employing only multipartite LOCC, which, upon being given a state $\sigma\in\mathcal{D}\!\left(\bigotimes_{i=1}^{M}\mathcal{H}_{A_{i}}^{\otimes n}\right)$, creates a state $\rho\in\mathcal{D}\!\left(\bigotimes_{i=1}^{M}\mathcal{H}_{A_{i}}^{\otimes n}\right)$. Let $\rho_{|\Omega}$ denote the final state conditioned on the protocol not aborting.
    
    A protocol $\mathcal{P}$ is said to be a DI $M$-partite entanglement distillation certification (DIMEC) protocol if the following conditions hold:

    \begin{enumerate}
        \item Noise tolerance (completeness): The probability that $\mathcal{P}$ aborts when applied on $\phi^{\operatorname{honest}}\in\mathcal{S}^{\operatorname{honest}}$ using a measurement device from $\mathcal{D}^{\operatorname{honest}}$ is at most~$\varepsilon_{\operatorname{cmp}}$.\label{cond:completeness}

        \item Entanglement certification (soundness): For every source $\sigma$ and measurement device, either $E_{D}^{n,\varepsilon_{\operatorname{snd}}}(\rho_{|\Omega})\geq r$ or $\mathcal{P}$ aborts with probability greater than $1-\varepsilon_{\operatorname{smo}}$ when applied on $\sigma$.\label{cond:soundness}
    \end{enumerate}
\end{definition}

To design a protocol and show that it satisfies all the conditions in Definition~\ref{def:dimec_protocol}, we need to use some information-theoretic quantities. Here we recall the definitions of von Neumann entropy, coherent information, and smooth conditional max-entropy. 
\begin{definition}
    The von Neumann entropy $H(A)_{\rho}$ of a state $\rho_{A}$ is defined as follows:
    \begin{equation}
        H(A)_{\rho} \coloneqq -\operatorname{Tr}[\rho_{A}\log_2 \rho_{A}].
    \end{equation}
\end{definition}

\begin{definition}
    The coherent information $I(A\rangle B)_{\rho}$ of a bipartite state $\rho_{AB}$ is defined as follows:
    \begin{equation}
        I(A\rangle B)_{\rho} \coloneqq H(B)_{\rho} -H(AB)_{\rho} =  -H(A|B)_{\rho},
    \end{equation}
    where $H(B)_{\rho}$ is the von Neumann entropy of $\rho_{B}=\operatorname{Tr}_{A}[\rho_{AB}]$ and $H(A|B)_{\rho} \coloneqq H(AB)_{\rho} - H(B)_{\rho} $ is the quantum conditional entropy of the state $\rho_{AB}$.
\end{definition}

\begin{definition}[\cite{Tomamichel17}]
    The smooth conditional max-entropy of a bipartite state $\rho_{AB}$ is defined for all $\varepsilon\in (0,1)$ as 
    \begin{equation}
        H^{\varepsilon}_{\max}(A\vert B)_{\rho}\coloneqq \inf_{\tilde{\rho} \in \mathcal{D}_{\leq}}H_{\max}(A\vert B)_{\rho},
    \end{equation}
     where
     \begin{equation}    
     \mathcal{D}_{\leq} \coloneqq \{\omega \geq 0 : \operatorname{Tr}[\omega]\leq 1\},
     \end{equation}
     $\tilde{\rho}$ is such that $\sqrt{1-F(\rho,\tilde{\rho})}\leq \varepsilon$, $H_{\max}(A\vert B)_{\rho}\coloneqq  \sup_{\sigma_{B}}\log_2 F(\rho_{AB},\mathbb{I}_A\otimes\sigma_{B})$, and $F(\rho,\sigma)\coloneqq \left\Vert \sqrt{\rho} \sqrt{\sigma} \right\Vert_1^2$ is the fidelity.
\end{definition}

\section{Device-Independent Multipartite Entanglement Certification Protocol}\label{sec:protocol}

    Before introducing our proposed DIMEC protocol, first we need to discuss the MABK inequality~\cite{PhysRevA.46.5375,PhysRevLett.65.1838,AVBelinskiĭ_1993}. Let~$x_{i}$ denote an input to the $i$th measurement device, and let $a_{i}$ denote the outcome of a measurement, where $i\in\{1,\ldots,M\}$ and $M$ is the number of parties involved. We can then define the MABK inequality as follows. 

    \begin{definition}\label{def:MABK}
        Let $\hat{O}_0^i$ and $\hat{O}_1^i$ be binary observables for all $ i \in [M]$. The $M$-partite MABK operator, $\mathcal{K}_M$, is defined by the following recursion relation:
        \begin{align}\label{eqn:def_mabk_operator}
            \mathcal{K}_M &\coloneqq  \frac{1}{2}\mathcal{F}(\mathcal{K}_{M-1},\overline{\mathcal{K}}_{M-1},\hat{O}_0^M,\hat{O}_1^M),\\
            \label{eqn:def_mabk_operator-2}
            \mathcal{K}_2 &\coloneqq  \frac{1}{2}\mathcal{F}(\hat{O}_0^1,\hat{O}_1^1,\hat{O}_0^2,\hat{O}_1^2),
    \end{align}
    where $\overline{\mathcal{K}}_{M-1}$ is obtained from $\mathcal{K}_{M-1}$ by exchanging  $\hat{O}_0^i$ and $\hat{O}_1^i$ for all $ i\in[M]$ and 
    \begin{multline}\label{eqn:def_bell}
        \mathcal{F}(\hat{B}_0, \hat{B}_1, \hat{C}_0, \hat{C}_1) \\
        \coloneqq \hat{B}_0\otimes \left(\hat{C}_0 + \hat{C}_1\right) + \hat{B}_1\otimes\left(\hat{C}_0 - \hat{C}_1\right).
    \end{multline}
    The $M$-partite MABK inequalities are then defined for all $ M\geq 2$ as
    \begin{equation}\label{eqn:inequality_mabk}
        2^{\frac{4-M}{2}}\left|\operatorname{Tr}\!\left[\mathcal{K}_M\, \rho_{\hat{A}_{[M]}}\right]\right|\leq 2^{\frac{m-M+3}{2}},\quad m\in[M].
    \end{equation}
\end{definition}

    The MABK inequalities are such that a violation of the inequalities for $m = 1$ proves that at least two parties are entangled, the violation of the inequalities for $m = M-1$ proves genuine $M$-partite entanglement, and the case where $m = M$ gives an upper bound (tight) on what is achievable by quantum mechanics. In this work, we are interested in $m=M-1$ and $m=M$, which correspond to 
    \begin{equation}\label{eqn:inequality_mabk_limits}
        \beta_M\coloneqq 2^{\frac{4-M}{2}}\left|\operatorname{Tr}\!\left[\mathcal{K}_M\, \rho_{\hat{A}_{[M]}}\right]\right|\in [2,2\sqrt{2}].
    \end{equation}
    The CHSH inequality corresponds to
    \begin{equation}\label{eqn:inequality_chsh_limits}
        \beta_2\coloneqq 2\left|\operatorname{Tr}\!\left[\mathcal{K}_2\, \rho_{\hat{A}_{[2]}}\right]\right|\in [2,2\sqrt{2}].
    \end{equation}

    For our purposes, we need to turn the MABK inequality into a game. This can be done using the procedure outlined in~\cite{1313847}. By unraveling the recursion in~\eqref{eqn:def_mabk_operator}--\eqref{eqn:def_mabk_operator-2}, we can rewrite the $M$-MABK operator as
    \begin{equation}
    \label{eqn:mabk_operator_rewrite}
        \mathcal{K}_M = 2^{-2\floor*{\frac{M}{2}}}\sum_{x\in\{0,1\}^M}(-1)^{f(x)}\bigotimes_{i\in[M]}\hat{O}_{x_i}^i ,
    \end{equation}
    where $x_i\in\{0,1\}$ is the $i$th bit of $x$ and $f:\left\{0,1\right\}^M\rightarrow\left\{0,1,\perp\right\}$ is a function such that  $(-1)^{\perp}=0$  by convention.

    For the MABK game, the $M$ parties have two measurement settings each, denoted by $x_i\in\{0,1\}$, and all measurements have two possible outcomes, denoted by $a_i\in\{0,1\}$.  
    Then the winning condition for the MABK game is as follows~\cite{ribeiro2017MABK}: 
    \begin{align}\label{eqn:winning_condition_MABK_3}
        \textrm{win}(x_{[M]}, \, a_{[M]})\coloneqq \left\{\begin{array}
                {cc}%
                1,&\text{if}\,\bigoplus_{i=1}^M a_i = f(x_{[M]}) \\%
                0,&\text{else}\,\\
        \end{array}\right.,
    \end{align}
    where $f$ is defined by the $M$-MABK operator in~\eqref{eqn:mabk_operator_rewrite}. The minimum and maximum winning probabilities when the underlying state is genuinely multipartite entangled are respectively as follows~\cite{ribeiro2017MABK}:
    \begin{align}\label{eqn:p_min}
        p_{\min} & \coloneqq  2^{2\lfloor\frac{M}{2}\rfloor -M -1}+2^{\lfloor\frac{M}{2}\rfloor -\frac{M}{2} -2}, \\
        p_{\max} & \coloneqq 2^{2\lfloor\frac{M}{2}\rfloor -M -1}+2^{\lfloor\frac{M}{2}\rfloor -\frac{M}{2} -\frac{3}{2}}.
        \label{eqn:p_max}
    \end{align}
    Note that when $M$ is even, 
    \begin{equation}\label{eqn:p_win_even}
        p^e_{\mathrm{min}}= 3/4 \quad \text{  and  } \quad p^e_{\mathrm{max}}= (2+\sqrt{2})/4.
    \end{equation}
    Similarly when $M$ is odd, 
    \begin{equation}\label{eqn:p_win_odd}
        p^o_{\mathrm{min}}= (2+\sqrt{2})/8 \quad \text{  and  } \quad p^o_{\mathrm{max}}= 1/2.
    \end{equation}
    
    Based on the MABK game defined above, we propose Protocol~\ref{pro:ent_test}. There are two types of rounds in Protocol~\ref{pro:ent_test}: a testing round, where the parties play the MABK game and record the result, and a storage round, where the parties simply store the state they receive in a quantum memory. At the beginning of each round, we choose between these types uniformly random. The main result of our paper is that Protocol~\ref{pro:ent_test} is a DIMEC protocol, satisfying the requirements of  Definition~\ref{def:dimec_protocol}, and we state this result formally in Theorem~\ref{thm:main-result}.
    
    \begin{figure}
    \begin{minipage}{0.95\linewidth}
        \begin{algorithm}[H]
            \caption{: DIMEC protocol based on the MABK game }\label{pro:ent_test}
            \begin{algorithmic}[1]
                \STATEx \textbf{Arguments:}
                \STATEx\hspace{\algorithmicindent} $\mathcal{M}$ -- untrusted measurement device of two components, with inputs and outputs in the set $\{0,1\}$
                \STATEx\hspace{\algorithmicindent} $n \in \mathbb{N}_+$ -- number of rounds
                \STATEx\hspace{\algorithmicindent} $\gamma$ -- the probability of conducting a test
                \STATEx\hspace{\algorithmicindent} $\omega_{\mathrm{exp}}$ -- expected winning probability in the MABK game
                \STATEx\hspace{\algorithmicindent} $\delta_{\mathrm{est}} \in (0,1)$ -- width of the statistical confidence interval for the estimation test
                \STATEx\hspace{\algorithmicindent} $A_{i,j}$ indicates a classical register belonging to the $i$-th party and $j$-th round.
                \STATEx
                \STATE For every round $j\in[n]$, do steps~\ref{prostep:first}-\ref{prostep:last}:
                \STATE\hspace{\algorithmicindent}Let $\phi^j$ denote the multipartite state produced by the source in this round. \label{prostep:first}
                \STATE\hspace{\algorithmicindent}Set $A_{i,j},X_{i,j},W_j=\, \perp$ for all $ i \in [M] $.
                \STATE\hspace{\algorithmicindent}   Choose $T_j=1$ with probability $\gamma$ and $T_j=0$ with probability $1-\gamma$.
                \STATE\hspace{\algorithmicindent}If $T_j=1$:
                \STATE\hspace{\algorithmicindent}   Choose inputs $X_{i,j}\in\{0,1\}$ uniformly at random.
                \STATE\hspace{\algorithmicindent}   Measure $\phi^j$ using $\mathcal{M}$ with the inputs $X_{i,j}$ and record outputs $A_{i,j}\in\{0,1\}$.
                \STATE\hspace{\algorithmicindent}   Set $W_j = 1$ if the MABK game is won and $W_j=0$ otherwise.
                \STATE\hspace{\algorithmicindent}If $T_j=0$:
                \STATE\hspace{\algorithmicindent} Keep $\phi^j$ in the registers $\bigotimes_{i=1}^{m}\hat{A}_{i,j}$. \label{prostep:last}
                \STATE Abort if $ W \coloneqq \sum_{j=1}^n \chi(T_j=1)\,W_j < (\omega_{\mathrm{exp}}\gamma - \delta_{\mathrm{est}})\cdot n$. \label{prostep:abort_ET_1}
            \end{algorithmic}
        \end{algorithm}
    \end{minipage}
    \end{figure}
    
   \section{Completeness}\label{sec:completeness}
   
   In this section, we show that Protocol~\ref{pro:ent_test} is complete. To prove the completeness condition in Definition~\ref{def:dimec_protocol}, we need to show that our protocol applied to an arbitrary $\phi^{\operatorname{honest}}\in\mathcal{S}^{\operatorname{honest}}$ aborts with only a small probability. We prove this result in the following theorem.
    \begin{theorem}\label{thm:completeness}
    The following bound holds: 
        \begin{equation}\label{eqn:completeness_thm}
            \varepsilon_{\operatorname{cmp}}\leq \exp\!\left(-2n\delta^{2}_{\operatorname{est}}\right),
        \end{equation}
        implying that the probability that Protocol~\ref{pro:ent_test} aborts for $\phi^{\operatorname{honest}}\in\mathcal{S}^{\operatorname{honest}}$ is no larger than $\exp\!\left(-2n\delta^{2}_{\operatorname{est}}\right)$.
    \end{theorem}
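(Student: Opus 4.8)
The plan is to recognize the abort condition in step~\ref{prostep:abort_ET_1} of Protocol~\ref{pro:ent_test} as a lower-tail large-deviation event for a sum of bounded, independent random variables, and then to invoke Hoeffding's inequality. First I would define, for each round $j\in[n]$, the random variable $C_j \coloneqq \chi(T_j=1)\,W_j$, which equals $1$ exactly when round $j$ is a testing round ($T_j=1$) \emph{and} the MABK game is won in that round ($W_j=1$), and equals $0$ otherwise. With this notation, the quantity appearing in the abort rule is $W=\sum_{j=1}^{n}C_j$, and each $C_j$ takes values in the interval $[0,1]$.

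Next I would compute the mean of $W$ in the honest setting. For an honest source $\phi^{\mathrm{honest}}\in\mathcal{S}^{\mathrm{honest}}$ measured with an honest device from $\mathcal{D}^{\mathrm{honest}}$, the conditional probability of winning the MABK game given a testing round equals $\omega_{\mathrm{exp}}$ by the definition of the honest model. Since the coin $T_j$ is set to $1$ independently with probability $\gamma$, we obtain $\mathbb{E}[C_j]=\gamma\,\omega_{\mathrm{exp}}$, and therefore $\mathbb{E}[W]=n\gamma\,\omega_{\mathrm{exp}}$. The abort event $W<(\omega_{\mathrm{exp}}\gamma-\delta_{\mathrm{est}})\,n$ is then precisely the event $W<\mathbb{E}[W]-n\delta_{\mathrm{est}}$, i.e.\ a downward fluctuation of $W$ below its mean by at least $n\delta_{\mathrm{est}}$. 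Because the honest source emits a fresh independent copy each round and the honest device acts identically and independently across rounds, the $C_j$ are i.i.d.\ and Hoeffding's inequality applies, giving
\begin{equation}
    \Pr\!\left[W\leq \mathbb{E}[W]-n\delta_{\mathrm{est}}\right]\leq \exp\!\left(-\frac{2\,(n\delta_{\mathrm{est}})^{2}}{\sum_{j=1}^{n}1}\right)=\exp\!\left(-2n\delta^{2}_{\mathrm{est}}\right),
\end{equation}
where the denominator $\sum_{j=1}^{n}(1-0)^2=n$ is the sum of squared ranges of the $C_j$. This upper bounds the probability of aborting on an honest input, which is exactly $\varepsilon_{\mathrm{cmp}}$, yielding the claimed bound $\varepsilon_{\mathrm{cmp}}\leq\exp(-2n\delta^{2}_{\mathrm{est}})$.

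This argument is short, and I do not anticipate a genuine technical obstacle; the only point that requires care is the justification of the i.i.d.\ (or at least independence) structure of the $C_j$, which is what licenses the use of a concentration inequality. This rests on the assumption, built into the honest setting, that the source produces independent copies of $\phi^{\mathrm{honest}}$ and that the honest measurement device is memoryless across rounds. If the honest model were weakened to guarantee only a per-round winning probability of at least $\omega_{\mathrm{exp}}$, one would note that $\mathbb{E}[C_j]\geq\gamma\,\omega_{\mathrm{exp}}$, so the abort event is contained in $\{W\leq\mathbb{E}[W]-n\delta_{\mathrm{est}}\}$, and the identical Hoeffding bound continues to hold.
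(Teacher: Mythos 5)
Your proof is correct and takes essentially the same route as the paper's: both treat the sequence $\chi(T_j=1)\,W_j$ as i.i.d.\ bounded random variables under the honest (IID) implementation, identify the abort event of step~\ref{prostep:abort_ET_1} as a downward deviation of $n\delta_{\mathrm{est}}$ below the mean $n\gamma\,\omega_{\mathrm{exp}}$, and apply Hoeffding's inequality to obtain the bound $\exp\!\left(-2n\delta^{2}_{\mathrm{est}}\right)$. Your write-up is in fact more explicit than the paper's terse proof, spelling out the mean computation and the squared-range normalization, which is a welcome addition rather than a deviation.
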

    
    \begin{proof}
        Protocol~\ref{pro:ent_test} aborts  when $W$, the estimator for the MABK game winning probability, is not high enough. This happens when the MABK violation is not large or  if the number of samples is not sufficiently large. Whenever  $\phi^{\operatorname{honest}}\in\mathcal{S}^{\operatorname{honest}}$ and the rounds are independent and identically distributed (IID), i.e., an honest implementation, the sequence $\left(\chi(T_j=1)\,W_j\right)_{j=1}$ is a sequence of IID random variables. Using Hoeffding's inequality \cite{H63}, we conclude that
        \begin{equation}\label{eqn:completeness_epsilon}
            \varepsilon_{\operatorname{cmp}}=\Pr\!\left(W \leq (\omega_{\mathrm{exp}}\gamma - \delta_{\mathrm{est}})\cdot n\right)\leq e^{-2n\delta^{2}_{\operatorname{est}}},
        \end{equation}
        completing the proof.
    \end{proof}

    

    \section{Soundness}\label{sec:soundness}

    In this section, we prove the soundness of Protocol~\ref{pro:ent_test}. To do so, we need to show that the protocol ensures a minimum amount of multipartite distillable entanglement or aborts with high probability. First, we briefly discuss a protocol to distill multipartite entanglement from a mixed state. We will use this achievable rate of multipartite distillable entanglement to prove the soundness of Protocol~\ref{pro:ent_test}.

    Consider the following setting: a pure state $\psi_{A_1\cdots A_M B R}$ is held by $M+2$ parties, with the $i$th  sender holding $A_i$, for $i\in\{1,\ldots,M\} \eqqcolon [M]$, one receiver holding $B$, and a reference system holding $R$, which serves as a purifying system. Additionally, Alice-$i$ and Bob share a maximally entangled state $\Phi(c_i)_{A^\prime_i B^\prime_i}$ of Schmidt rank $c_i \in \mathbb{N}$, and so the initial overall state is  
    \begin{equation}\label{eqn:init_state_state_merge}
        \psi_{A_1\cdots A_M B R}\otimes \Phi(c_1)_{A^\prime_1 B^\prime_1} \otimes \cdots \otimes \Phi(c_M)_{A^\prime_M B^\prime_M}.
    \end{equation}
    Multiparty state merging is an LOCC protocol  that transforms this initial state to 
    \begin{equation}\label{eqn:final_state_state_merge}
        \psi_{\hat{A}_1\cdots \hat{A}_M \hat{B} R}\otimes \Phi(d_1)_{A^{\prime\prime}_1 B^{\prime\prime}_1} \otimes \cdots \otimes \Phi(d_M)_{A^{\prime\prime}_M B^{\prime\prime}_M},
    \end{equation}
    such all of the systems $\hat{A}_1\cdots \hat{A}_M \hat{B}$ are in Bob's possession.
    (For details, see Section~6.3 of~\cite{colomer2023decoupling}.)
    If $c_i>d_i$, then the protocol consumes $\log_2 c_i-\log_2 d_i$ ebits shared between $A_i$ and $B$. If $c_i<d_i$, then the protocol produces $\log_2 d_i-\log_2 c_i$ ebits, which are shared between $A_i$ and $B$ (recall that the term ``ebit'' is synonymous with the standard Bell pair). If the protocol distills ebits between each Alice, $A_1\cdots A_M$, and Bob~$B$, then they can apply LOCC to convert several copies of their ebits into multipartite GHZ states.
    
    For a one-shot realization of the entire process of using LOCC to convert $\psi_{A_1\cdots A_M B R}$ into a GHZ state, we need only understand the one-shot behavior of the state merging step. The theorem needed for the state merging step is as follows:

    \begin{theorem}[Theorem~6.15 of~\cite{colomer2023decoupling}]\label{thm:state_merge_one_shot}
        Given the setting above, quantum state merging can be achieved successfully, in the sense that the fidelity constraint in~\eqref{eqn:one_shot_distillable_ent} holds, for 
        \begin{equation}
        \varepsilon =8\cdot3^{M/2}\sqrt{\varepsilon^\prime}  \in (0,1),
        \end{equation}  if for all $K\subseteq [M]$ such that $K\neq\emptyset$,
        \begin{multline}\label{eqn:one_shot_state_merge}
             \sum_{k\in K}(\log_2 d_k-\log_2 c_k)\\ \leq -H^{\varepsilon^\prime}_{\max}(A_K|A_{K^c}B)_{\psi}+2\log\varepsilon^\prime.
        \end{multline}
    \end{theorem}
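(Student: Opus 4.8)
The plan is to prove the achievability of multiparty state merging along the standard route of \emph{decoupling} followed by \emph{Uhlmann's theorem}, generalizing the bipartite one-shot merging argument to $M$ senders. The guiding principle is that merging succeeds exactly when each Alice can locally scramble her share and discard a subsystem so that the joint collection of discarded registers becomes decoupled from the reference $R$; once the state on the reference side is (approximately) in tensor product with the discarded environments, a decoding isometry for Bob that reconstructs the global state together with the fresh resources $\Phi(d_i)$ is guaranteed to exist.

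First I would recast the protocol in decoupling language. Each Alice-$i$ holds $A_i$ together with her half of the input resource $\Phi(c_i)$; her local operation consists of applying a Haar-random unitary $U_i$ on her combined system and then splitting it into a register that is effectively forwarded to Bob and an environment $E_i$ that is discarded, with the dimensions fixed by the net balance $\log_2 c_i-\log_2 d_i$. I would then write down the expected post-operation state $\rho_{E_{[M]}R}$ and aim to bound its distance from the decoupled product $\tau_{E_1}\otimes\cdots\otimes\tau_{E_M}\otimes\psi_R$, where $\psi_R$ is the untouched marginal on the reference.

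The technical heart is a multipartite one-shot decoupling bound. Averaging over the independent local unitaries $U_1,\ldots,U_M$, I would show that the expected deviation $\mathbb{E}\,\|\rho_{E_{[M]}R}-\tau_{E_{[M]}}\otimes\psi_R\|_1$ is controlled by a sum over all nonempty subsets $K\subseteq[M]$ of terms of the form $2^{-\frac{1}{2}(\cdots)}$, each governed by $H^{\varepsilon'}_{\max}(A_K\vert A_{K^c}B)_\psi$ together with the net ebits $\sum_{k\in K}(\log_2 d_k-\log_2 c_k)$. I would establish this either by induction on the number of parties, peeling off one sender at a time and invoking the bipartite decoupling bound conditioned on the remaining systems, or by a direct second-moment (swap-trick) computation in which expanding $\mathbb{E}\,\|\cdot\|_2^2$ and converting to the trace norm produces exactly one cross term per subset $K$. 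The hypothesis~\eqref{eqn:one_shot_state_merge} is precisely the statement that each such subset term is individually small, so that the sum is bounded; the factor $3^{M/2}$ emerges from collecting the $2$-norm-to-$1$-norm conversions and per-party overhead across the $M$ independent averages, while the remaining factor $8$ and the $\sqrt{\varepsilon'}$ arise from smoothing $H_{\max}$ and from the conversion between trace distance and fidelity.

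Finally I would convert decoupling into merging. Because the global state is pure, decoupling of $E_{[M]}$ from $R$ forces $\rho_{E_{[M]}R}$ to be $\varepsilon$-close to a product in which $R$ carries exactly $\psi_R$; applying Uhlmann's theorem to the purification then yields a local isometry in Bob's laboratory that reconstructs $\psi_{\hat A_{[M]}\hat B R}$ together with the output resources $\Phi(d_i)$, at fidelity at least $1-\varepsilon$ with $\varepsilon=8\cdot 3^{M/2}\sqrt{\varepsilon'}$. I expect the main obstacle to be the multipartite decoupling step itself, namely guaranteeing that a \emph{single} choice of local unitaries simultaneously decouples \emph{all} $2^M-1$ subsets while keeping the smoothing parameter $\varepsilon'$ uniform throughout the induction: a naive union bound over subsets would degrade the constant well beyond $3^{M/2}$, so careful bookkeeping of the chain-rule structure of the smooth conditional max-entropy is needed to keep the overhead only exponential in $M$ and not in $2^M$.
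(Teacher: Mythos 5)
First, a point of order: the paper does not prove this statement at all --- it imports it verbatim as Theorem~6.15 of \cite{colomer2023decoupling} --- so the only meaningful comparison is with the proof in that reference. Your sketch follows the same overall strategy as that proof: local Haar-random unitaries on each sender's side, a multi-user one-shot decoupling bound with one error term per nonempty subset $K\subseteq[M]$, and a final application of Uhlmann's theorem to convert decoupling from the reference into a decoding isometry on Bob's side. Even the shape of your target bound (a sum over subsets, per-party constants accumulating to $3^{M/2}$, and a $\sqrt{\varepsilon'}$ from trace-distance-to-fidelity conversion) is the right one.

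The genuine gap is that you name, but do not resolve, the one step that constitutes essentially the entire difficulty: simultaneous smoothing. The second-moment/swap-trick computation you invoke yields bounds in terms of conditional collision (R\'enyi-$2$) entropies of the \emph{unsmoothed} state; to pass to $H^{\varepsilon'}_{\max}$ one must substitute a nearby smoothed state, and the optimizing smoothed state is different for each of the $2^M-1$ subsets $K$, whereas the protocol uses a single choice of local unitaries acting on a single global state. Your proposed fix --- ``careful bookkeeping of the chain-rule structure of the smooth conditional max-entropy'' --- is precisely what is known \emph{not} to work: chain rules for smooth entropies degrade the smoothing parameter at every application, and finding one smoothed state that is simultaneously good for all subsets (the simultaneous smoothing conjecture) is an open problem. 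The cited proof succeeds by circumventing, not performing, simultaneous smoothing: Colomer and Winter use a telescoping-sum decomposition of the deviation from the decoupled state, in which each telescoping term isolates a single newly decoupled subsystem and can therefore be handled with ordinary single-state smoothing, combined with a tensorization property of expected contraction coefficients of random channels; the per-step constants of this telescope are what generate the $8\cdot 3^{M/2}$ prefactor. Without this (or an equivalent) mechanism, your induction that peels off one sender at a time does not close, because the inductive hypothesis would have to hold for a smoothed state that changes from step to step.
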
 
    
    Using the above theorem and the steps elucidated earlier, we conclude the following:
    \begin{theorem}[Theorem~9 of \cite{salek2023new}]\label{theorem:asymptot}
        Let $\rho_{A_1\cdots A_M}$ be a state held by $m$~parties. The following is an achievable rate for GHZ distillation under LOCC:
        \begin{equation}\label{eqn:asymptotic_lower_bound}
            \max_{k\in [M]}\left\{\min_{K\subseteq [M]\backslash k}\frac{I(A_K\rangle A_{[M]\backslash K})_\rho}{|K|}\right\},
        \end{equation}
        where $I(A\rangle B)_\theta$ is the coherent information of a state~$\theta_{AB}$. 
    \end{theorem}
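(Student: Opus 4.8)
The plan is to build the asymptotic protocol directly on the one-shot state-merging result of Theorem~\ref{thm:state_merge_one_shot}, applied to many IID copies, and then to convert the distilled bipartite ebits into GHZ states. The guiding observation is that the coherent information appearing in~\eqref{eqn:asymptotic_lower_bound} is exactly the many-copy limit of the smooth conditional max-entropy in~\eqref{eqn:one_shot_state_merge}, via the quantum asymptotic equipartition property (AEP), together with the identity $I(A\rangle B)=-H(A\mid B)$.

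First I would fix a ``receiver'' $k\in[M]$ and purify $\rho_{A_{[M]}}$ with a reference system $R$ to obtain a pure state $\psi_{A_{[M]}R}$. Designating party $A_k$ as Bob (the system $B$ of Theorem~\ref{thm:state_merge_one_shot}) and the remaining $M-1$ parties $\{A_j:j\in[M]\setminus k\}$ as senders, I would apply Theorem~\ref{thm:state_merge_one_shot} to $\psi^{\otimes n}$. For every nonempty $K\subseteq[M]\setminus k$, the merging constraint~\eqref{eqn:one_shot_state_merge} then reads $\sum_{j\in K}(\log_2 d_j-\log_2 c_j)\leq -H^{\varepsilon'}_{\max}(A_K\mid A_{([M]\setminus k)\setminus K}A_k)_{\rho^{\otimes n}}+2\log\varepsilon'$, and because $k\notin K$ the conditioning system is precisely $A_{[M]\setminus K}$.

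Next I would pass to the asymptotic limit. Choosing $\varepsilon'=\varepsilon'_n\to 0$ slowly, the $2\log\varepsilon'$ term is $o(n)$ and the overall infidelity $\varepsilon=8\cdot 3^{M/2}\sqrt{\varepsilon'_n}\to 0$, while the quantum AEP gives $\tfrac{1}{n}H^{\varepsilon'_n}_{\max}(A_K\mid A_{[M]\setminus K})_{\rho^{\otimes n}}\to H(A_K\mid A_{[M]\setminus K})_{\rho}$. Writing $e_j$ for the net ebit rate $\lim_n\tfrac{1}{n}(\log_2 d_j-\log_2 c_j)$ distilled on the link between $A_j$ and Bob, this shows that every rate tuple obeying $\sum_{j\in K}e_j\leq I(A_K\rangle A_{[M]\setminus K})_{\rho}$ for all $K\subseteq[M]\setminus k$ is achievable.

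Finally I would convert ebits into GHZ states and optimize. Creating one $M$-party GHZ state costs exactly one ebit on each of the $M-1$ links: Bob prepares a local GHZ state, keeps one qubit, and teleports the other $M-1$ qubits to the remaining parties. Hence the GHZ rate attainable with receiver $k$ equals the maximum of $\min_{j\in[M]\setminus k}e_j$ subject to the constraints above; setting all $e_j$ equal shows this maximum is $\min_{K\subseteq[M]\setminus k}I(A_K\rangle A_{[M]\setminus K})_{\rho}/|K|$, and maximizing over $k$ produces~\eqref{eqn:asymptotic_lower_bound}. The step I expect to be the main obstacle is this asymptotic conversion: coupling $\varepsilon'$ to $n$ so that the total infidelity vanishes while the smooth max-entropy converges to the conditional von Neumann entropy uniformly over the exponentially many subsets $K$, and confirming that the simultaneous ebit distillations on all $M-1$ links can be carried out and then recombined by LOCC without mutual interference.
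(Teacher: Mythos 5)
Your proposal is correct and takes essentially the same route the paper does: the paper presents this statement as a citation of Theorem~9 of \cite{salek2023new}, justified by precisely the sketch you give---multiparty state merging toward a designated receiver $A_k$ via Theorem~\ref{thm:state_merge_one_shot}, with the smooth conditional max-entropies converging to coherent informations in the IID limit, followed by LOCC (teleportation) conversion of the distilled ebits into GHZ states, where the equal-rate choice yields $\min_{K\subseteq[M]\backslash k} I(A_K\rangle A_{[M]\backslash K})_\rho/|K|$ and maximizing over $k$ gives \eqref{eqn:asymptotic_lower_bound}. The only difference is bookkeeping: the paper defers the asymptotic details (your AEP step, handled in \cite{salek2023new} via time-shared decoding) to the cited reference, whereas you carry them out explicitly; your worry about uniformity over subsets $K$ is moot since there are only $2^{M-1}-1$ of them, a number independent of $n$.
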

    
    The statement above gives a bound only on the asymptotic rate of GHZ distillation from mixed states. For our purpose, we require a lower bound on $\frac{1}{n}E_{D}^{\varepsilon}(\rho_{A_{[M]}}^{\otimes n})$. The following proposition gives such a bound by making use of the developments of \cite{colomer2023decoupling,salek2023new}.
    \begin{proposition}
        Let $\rho_{A_1\cdots A_M}$ be held by $m$ parties. Fix $\varepsilon' > 0$ such that $\varepsilon \coloneqq 8\cdot 3^{M/2} \sqrt{\varepsilon^{\prime}}  \in (0,1)$. Then the following is an achievable one-shot rate for GHZ distillation under LOCC; i.e.,
        \begin{multline}\label{eqn:one_shot_lower_bound}
        \frac{1}{n}E_{D}^{\varepsilon}(\rho_{A_{[M]}}^{\otimes n}) \geq \\
            \max_{k\in [M]}\left\{\min_{K\subseteq [M]\backslash k}\frac{-H^{\varepsilon^\prime}_{\max}(A_K|A_{[M]\backslash K})_{\rho}}{|K|}\right\}+\frac{2\log\varepsilon^\prime}{M-1}.
        \end{multline}
    \end{proposition}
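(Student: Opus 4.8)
The plan is to lift the asymptotic GHZ-distillation strategy underlying Theorem~\ref{theorem:asymptot} into the one-shot regime by substituting the one-shot merging guarantee of Theorem~\ref{thm:state_merge_one_shot} for its asymptotic counterpart. Concretely, each coherent-information term $I(A_K\rangle A_{[M]\setminus K})_{\rho} = -H(A_K\vert A_{[M]\setminus K})_{\rho}$ in the achievable asymptotic rate should be replaced by the smooth max-entropy quantity $-H^{\varepsilon'}_{\max}(A_K\vert A_{[M]\setminus K})_{\rho}$ supplied by the one-shot theorem. First I would fix a distinguished party $k\in[M]$ to play the receiver $B$, let the remaining parties $[M]\setminus k$ act as the senders, and purify $\rho_{A_{[M]}}^{\otimes n}$ with a reference system $R$ so that the hypotheses of Theorem~\ref{thm:state_merge_one_shot} are met. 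The strategy is then: run multiparty state merging so that party $k$ ends up sharing $\log_2 d$ ebits with each other party, and have $k$ locally prepare $\log_2 d$ copies of the $M$-party GHZ state and teleport the appropriate shares to the others using those ebits. Since teleportation is exact LOCC, this produces a rank-$d$ GHZ state whenever the merging step succeeds.

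Next I would apply Theorem~\ref{thm:state_merge_one_shot} with the balanced allocation $\log_2 d_i-\log_2 c_i=\log_2 d$ for every sender $i$ (taking $c_i=1$, i.e. no pre-shared entanglement), and identify $A_{K^c}B$ in the merging notation with $A_{[M]\setminus K}$. The success condition then becomes $|K|\log_2 d\le -H^{\varepsilon'}_{\max}(A_K\vert A_{[M]\setminus K})_{\rho}+2\log\varepsilon'$ for every nonempty $K\subseteq[M]\setminus k$; solving each constraint for $\log_2 d$ and taking the worst-case subset yields $\log_2 d=\min_{K}\bigl(-H^{\varepsilon'}_{\max}(A_K\vert A_{[M]\setminus K})_{\rho}+2\log\varepsilon'\bigr)/|K|$, while Theorem~\ref{thm:state_merge_one_shot} simultaneously certifies the fidelity bound $F\ge 1-\varepsilon$ with $\varepsilon=8\cdot3^{M/2}\sqrt{\varepsilon'}$. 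Because the ebit-to-GHZ conversion is a fixed LOCC channel, monotonicity of fidelity transfers this guarantee to the final GHZ state, so this value of $\log_2 d$ is an achievable log-rank in Definition~\ref{def:one-shot-distill}. Normalizing by $n$, separating the entropic contribution from the $2\log\varepsilon'$ contribution, and finally maximizing over the receiver choice $k$ would produce the stated right-hand side.

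The step I expect to be the main obstacle is the bookkeeping of the correction term: the derivation naturally attaches a factor $1/|K|$ to $2\log\varepsilon'$, whereas the claimed inequality carries $2\log\varepsilon'/(M-1)$, so one must argue carefully — tracking the sign of $\log\varepsilon'$ and the range $1\le|K|\le M-1$ — that the family of per-subset constraints can be summarized by the single displayed expression, and verify that the balanced allocation meets all $2^{M-1}-1$ constraints at once with the binding subset controlling the rate. A secondary technical point is confirming that composing the approximate merging step with the exact GHZ-conversion LOCC leaves the fidelity parameter $\varepsilon=8\cdot3^{M/2}\sqrt{\varepsilon'}$ intact, and reconciling the single-copy smooth max-entropy appearing in the bound with the $n$-fold state to which Theorem~\ref{thm:state_merge_one_shot} is applied.
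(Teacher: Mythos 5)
Your proposal follows essentially the same route as the paper: the paper's proof is a three-sentence sketch saying exactly this --- take the GHZ-distillation protocol behind Theorem~\ref{theorem:asymptot} (party $k$ as receiver, the others merging their shares to $k$, the distilled ebits then converted into GHZ states by LOCC), drop time sharing in favor of simultaneous decoding, and substitute the one-shot merging guarantee of Theorem~\ref{thm:state_merge_one_shot} so that each coherent information is replaced by $-H^{\varepsilon'}_{\max}$. Your write-up supplies strictly more detail than the paper does (the purification, the choice $c_i=1$, the balanced allocation $\log_2 d_i = \log_2 d$, the fidelity bookkeeping through the exact ebit-to-GHZ conversion), and those steps are sound.

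The obstacle you flag about the correction term is genuine, and the paper's proof is silent on it. The per-subset constraints yield $\log_2 d \le \min_{K}\bigl[\bigl(-H^{\varepsilon'}_{\max}(A_K|A_{[M]\backslash K})_{\rho}+2\log\varepsilon'\bigr)/|K|\bigr]$, and since $\log\varepsilon'<0$ the additive term is most damaging at $|K|=1$; what one can honestly extract as a separated bound is therefore $\min_K\bigl(-H^{\varepsilon'}_{\max}/|K|\bigr)+2\log\varepsilon'$, which is weaker than the claimed $+\,2\log\varepsilon'/(M-1)$ in \eqref{eqn:one_shot_lower_bound}. The stated form is recovered only when the binding subset has size $M-1$. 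That is what happens in the paper's eventual application (Theorem~\ref{thm:main-result}): there every entropy is replaced by the bound $-\eta_{\mathrm{opt}}$, which depends on $K$ only through $|K|$ and whose leading $O(n)$ part is independent of $|K|$, so the size-$(M-1)$ constraint binds and the $1/(M-1)$ normalization is legitimate in that context. In short, your derivation is correct where it is explicit, and the discrepancy you isolate is a looseness in the proposition as stated rather than a defect of your argument; the same holds for the mismatch you note between $\frac{1}{n}E_{D}^{\varepsilon}(\rho^{\otimes n})$ on the left and unnormalized, single-copy-looking entropies on the right --- the proposition is later invoked in its genuinely one-shot, unnormalized form on the $n$-round state.
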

    
    \begin{proof}
        To achieve this one-shot rate, we can use the protocol outlined in \cite[Theorem~9]{salek2023new} but without time sharing. Time shared decoding is replaced with simultaneous decoding, as the former is not possible in the one-shot setting. We obtain achievable rates for the one-shot case by replacing coherent information in~\eqref{eqn:asymptotic_lower_bound} with the conditional smooth max-entropy. This leaves us with an expression commensurate with~\eqref{eqn:one_shot_state_merge} in Theorem~\ref{thm:state_merge_one_shot}. 
    \end{proof}

\subsection{Entropy Accumulation Theorem (EAT)}

    To certify that a minimum amount of multipartite entanglement can be distilled from the states left over at the end of $n$ rounds of Protocol~\ref{pro:ent_test}, we need to lower bound  $-H_{\max}^{\varepsilon}\!\left(A^{\otimes n}_K|A^{\otimes n}_{[M]\backslash K}\right)_{\rho^{\otimes n}}$. Since this is a finite-length protocol and we are interested in a smooth entropic quantity, we can use Entropy Accumulation Theory (EAT) to obtain a rate~$r$ in terms of the desired error constants. To apply EAT, one needs to define ``EAT channels,'' which describe the sequential process under consideration, and max-tradeoff functions. In our case, the sequential process results from our protocol for the certification of multipartite entanglement distillation.
    
    For a sequential process like our protocol, the most restrictive relation between the rounds is to assume that they are IID. However, such a restriction is arguably too strong for the device-independent scenario. Entropy accumulation theory allows for a more general relation between the rounds of our protocol. This is encapsulated in the following definition of EAT channels.

    \begin{definition}[EAT channels \cite{dupuis2016entropy}]\label{def:eat_channels} 
        An EAT channel $\mathcal{N}_{j}:R_{j-1}\rightarrow R_{j} O_{j} S_{j} W_{j}$, for $j\in[n]$, is a CPTP map, such that, for all $j\in[n]$:

        \begin{enumerate}
        \item \label{cond:eat_chan_1} $W_{j}$ is a finite-dimensional classical system. $S_{j}$ and $R_{j}$ are 
         arbitrary quantum systems.

        \item \label{cond:eat_chan_2} Given an input state $\sigma_{R_{j-1}}$, the output state $\sigma_{R_{j} O_{j} S_{j}W_{j}} = \mathcal{N}_{j} \!\left(\sigma_{R_{j-1}} \right)$ has the property that one can perform a quantum instrument on the systems $O_j S_j$ (in the state $\sigma_{O_{j}S_{j}}$), obtain the classical register $W_{j}$, and discard it,  without changing the state $\sigma_{O_{j}S_{j}}$. That is, for the instrument $\mathcal{T}_{j} :O_{j}S_{j}\rightarrow O_{j}S_{j}W_{j}$ describing the process of obtaining $W_{j}$ from $O_{j}$ and $S_{j}$, it holds that $(\operatorname{Tr}_{W_{j}}\circ\mathcal{T}_{j}) \left(\sigma_{O_{j}S_{j}}\right)  = \sigma_{O_{j}S_{j}} $.

        \item \label{cond:eat_chan_3} $O_{j}$ is a finite-dimensional quantum system of dimension~$d_{O_{j}}$.

        \end{enumerate}
    \end{definition}

    If the rounds are not IID, then we cannot consider a round of the protocol to be completely independent of the preceding rounds. In other words, we cannot use the additivity property of the quantities under consideration. To resolve this, entropy accumulation theory requires that a special function, associated with our protocol, be found. This special function is called the max-tradeoff function.  

    To define max-tradeoff functions, we need to clarify some notation, which we will use henceforth. Given a value $w=(w_{1}, \dotsc, w_{n}) \in\mathcal{W}^{n}$, where $\mathcal{W}$ is a finite alphabet, we denote by $\mathrm{freq}_{w}$ the probability distribution over $\mathcal{W}$ defined by $\mathrm{freq}_{w}(\tilde{w}) \coloneqq \frac{\left| \left\{  j | w_{j} = \tilde{w}\right\}  \right|}{n}$ for $\tilde{w}\in\mathcal{W}$. If $\tau$ is a state classical on $W$, we write $\Pr\left[  w\right]  _{\tau}$ to denote the probability that $\tau$ assigns to $w$. Now, we move on to the definition of max-tradeoff functions.

    \begin{definition}[Max-tradeoff functions \cite{dupuis2016entropy}] \label{def:max_tradeoff_func}
        Let $\mathcal{N}_{1}$, \ldots, $\mathcal{N}_{n}$ be a family of EAT channels. Let $\mathcal{W}$ denote the common alphabet of $W_{1},\ldots,W_{n}$. A concave \footnote{Let $\hat{\Omega}$ be a set of frequencies defined via $\mathrm{freq}_{w} (\tilde{w})\in\hat{\Omega}$ if and only if $\tilde{w}\in\Omega$. We can consider concave functions, in contrast to affine ones~\cite{dupuis2016entropy}, since the event $\Omega$ defined in the current work results in a \emph{convex set} $\hat{\Omega}$.} function $f_{\max}$  from the set of probability distributions~$p$ over $\mathcal{W}$ to the real numbers is called a max-tradeoff function for $\{\mathcal{N}_{j}\}_j$ if it satisfies
        \begin{equation}\label{eqn:max_tradeoff_func_def}
            f_{\max}(p) \geq\sup_{\sigma} \left\{H\!\left(  O_{j} \middle\vert S_{j} \right)  _{\mathcal{N}_{j}(\sigma)}:\operatorname{Tr}_{R_j O_j S_j}[\mathcal{N}_{j}(\sigma)]=p\right\},
        \end{equation}
        for all $j\in[n]$, where the supremum is taken over all input states of $\mathcal{N}_{j}$ for which the marginal on $W_{j}$ of the output state is the probability distribution $p$.
    \end{definition}

    The statement of the EAT, relevant for the smooth max-entropy, is given below (see \cite[Theorem~4.4]{dupuis2016entropy} and Eq.~(A.2) of \cite[Appendix~A]{9996821}). 

    \begin{theorem}[\cite{dupuis2016entropy,9996821}]\label{thm:eat} 
        Let $\mathcal{N}_{j}:R_{j-1}\rightarrow R_{j} O_{j} S_{j} W_{j}$ for $j\in[n]$ be a sequence of EAT channels as in Definition~\ref{def:eat_channels}, $\tau_{OSW} = \left(  \operatorname{Tr}_{R_{n}}\circ\mathcal{N}_{n} \circ \dots\circ\mathcal{N}_{1}\right)  (\tau_{R_{0}})$ the final state, $\Omega$ an event defined over $\mathcal{W}^{n}$ indicating acceptance, $\Pr[\Omega]_{\tau}$ the probability of acceptance  given the underlying state $\tau$, and $\tau_{|\Omega}$ the final state conditioned on $\Omega$. 

        Let $\varepsilon_{\mathrm{smo}} \in(0,1)$. For $f_{\max}$ a max-tradeoff function for $\{\mathcal{N}_{j}\}_j$, as in Definition~\ref{def:max_tradeoff_func}, and all $t\in\mathbb{R}$ such that $f_{\max}\left(  \mathrm{freq}_{w} \right)  \leq t$ for all $w\in \mathcal{W}^{n}$ for which $\Pr\left[  w\right]  _{\tau_{|\Omega}}> 0$, the following holds:
        \begin{equation}\label{eqn:h_max_eat}
            H_{\max}^{\varepsilon_{\mathrm{smo}}} \left(  O|S \right)  _{\tau_{|\Omega}}\leq n t + v\sqrt{n} \;,
        \end{equation}
        where $d_{O_{i}}$ denotes the dimension of~$O_{i}$ and 
        \begin{multline}\label{eqn:h_max_eat_1}
            v  \coloneqq 2\left(  \log_2(1+2 d_{O_{i}} ) 
            + \lceil\left\| \triangledown f_{\max}\right\|_{\infty}\rceil\right)\\ \qquad\times  \sqrt{1-2\log_2(\varepsilon_{\mathrm{smo}} \cdot \Pr[\Omega]_{\tau})}.
        \end{multline}
    \end{theorem}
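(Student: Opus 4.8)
The plan is to obtain this max-entropy statement from the entropy accumulation machinery of \cite{dupuis2016entropy}, either by entropic duality with the min-entropy EAT or---more transparently---by redoing the Rényi argument with the Rényi parameter $\alpha>1$ appropriate for the max-entropy; I will sketch the direct route. The starting point is an auxiliary quantity, the sandwiched Rényi conditional entropy $\widetilde{H}_\alpha(O|S)$, which reduces to the von Neumann quantity $H(O|S)$ as $\alpha\to 1$ and which dominates the smooth max-entropy up to a correction of the schematic form $H_{\max}^{\varepsilon_{\mathrm{smo}}}(O|S)_{\tau_{|\Omega}} \leq \widetilde{H}_\alpha(O|S)_{\tau_{|\Omega}} + \frac{g(\varepsilon_{\mathrm{smo}})}{\alpha-1}$ for an explicit function $g$ and $\alpha\in(1,2)$. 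The entire argument is carried out at the level of $\widetilde{H}_\alpha$, and the conversion back to the smooth max-entropy is performed only at the very end.

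The core step is a chain rule for the $n$-round Rényi entropy. Exploiting the structure of the EAT channels in Definition~\ref{def:eat_channels}---in particular the Markov-type condition that $W_j$ can be reproduced from $O_jS_j$ without disturbing that state (Condition~\ref{cond:eat_chan_2})---one peels off one round at a time to bound $\widetilde{H}_\alpha(O_1\cdots O_n|S_1\cdots S_n)$ by a sum $\sum_{j=1}^n \sup_\sigma \widetilde{H}_\alpha(O_j|S_j)_{\mathcal{N}_j(\sigma)}$, where each supremum is constrained so that the marginal on $W_j$ matches the observed statistics. Each single-round term is then upper bounded through the defining property \eqref{eqn:max_tradeoff_func_def} of the max-tradeoff function: since $f_{\max}$ dominates $H(O_j|S_j)$ and $\widetilde{H}_\alpha$ differs from the von Neumann entropy by $O(\alpha-1)$, round $j$ contributes $f_{\max}$ evaluated at the $W_j$-marginal plus a correction linear in $(\alpha-1)$.

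Summing over rounds and invoking concavity of $f_{\max}$ together with the hypothesis $f_{\max}(\mathrm{freq}_w)\leq t$ for every $w$ with $\Pr[w]_{\tau_{|\Omega}}>0$ yields the leading term $n t$. The conditioning on the accept event $\Omega$ is handled by the standard argument that passing from $\tau$ to $\tau_{|\Omega}$ rescales probabilities by $\Pr[\Omega]_\tau$; this is the origin of the $-2\log_2(\varepsilon_{\mathrm{smo}}\cdot\Pr[\Omega]_\tau)$ factor inside $v$. Finally, inserting the Rényi-to-smooth conversion and choosing $\alpha = 1 + \Theta(1/\sqrt{n})$ balances the $O\!\left(n(\alpha-1)\right)$ accumulated correction against the $\frac{g(\varepsilon_{\mathrm{smo}})}{\alpha-1}$ conversion penalty, producing the second-order term $v\sqrt{n}$ with the stated dependence on $\log_2(1+2d_{O_i})$ and $\lceil\|\nabla f_{\max}\|_\infty\rceil$.

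The main obstacle is the Rényi chain rule: decomposing the global Rényi entropy into single-round contributions while respecting the Markov structure of the EAT channels is precisely the technically delicate heart of \cite{dupuis2016entropy}, and the max-entropy branch ($\alpha>1$) requires the separately established analogue of that chain-rule inequality. A secondary difficulty is extracting the explicit constant $v$, which demands careful quantitative (second-order) tracking of the $(\alpha-1)$ corrections through both the per-round tradeoff bound and the final $\alpha\to 1$ optimization, rather than merely their leading-order behavior.
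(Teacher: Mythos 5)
First, a point of context: the paper does not prove Theorem~\ref{thm:eat} at all --- it is imported verbatim from the literature (Theorem~4.4 of \cite{dupuis2016entropy} together with Eq.~(A.2) in Appendix~A of \cite{9996821}), and the paper's only ``proof'' is that citation. So your attempt is really a reconstruction of the Dupuis--Fawzi--Renner argument, and at the level of architecture it is faithful to it: a chain rule for sandwiched R\'enyi conditional entropies that peels off one round at a time, a per-round bound via the tradeoff function evaluated on the $W_j$-marginal, a R\'enyi-to-smooth conversion, a R\'enyi parameter at distance $\Theta(1/\sqrt{n})$ from $1$, and the conditioning on $\Omega$ producing the $\log_2 \Pr[\Omega]_\tau$ contribution to $v$.

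There is, however, a concrete error that would break the argument if executed as written: you run the max-entropy branch with R\'enyi parameter $\alpha\in(1,2)$ and conversion $H_{\max}^{\varepsilon_{\mathrm{smo}}}(O|S)\leq \widetilde{H}_\alpha(O|S)+\frac{g(\varepsilon_{\mathrm{smo}})}{\alpha-1}$. That is the \emph{min}-entropy relation. For $\alpha>1$ the sandwiched conditional R\'enyi entropy satisfies $\widetilde{H}_\alpha(O|S)\leq H(O|S)$, i.e., it lies \emph{below} the von Neumann entropy, so it cannot upper-bound the smooth max-entropy in general: one can construct states for which $H_{\max}^{\varepsilon}-H$ grows without bound with the dimension while $\frac{g(\varepsilon)}{\alpha-1}$ stays constant, so your very first inequality fails. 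The max-entropy EAT is proved with $\alpha\in(\tfrac12,1)$, where $\widetilde{H}_\alpha\geq H$, using a conversion of the form $H_{\max}^{\varepsilon_{\mathrm{smo}}}(O|S)\leq \widetilde{H}_\alpha(O|S)+\frac{g(\varepsilon_{\mathrm{smo}})}{1-\alpha}$ and per-round corrections of order $1-\alpha$ (this is also where the $\log_2(1+2d_{O_i})$ term in $v$ originates); equivalently, one can dualize the $\alpha>1$ min-entropy statement through a purification --- the route you mention but do not take, and the only way the parameter range you wrote down can be made to appear legitimately. A secondary inaccuracy: you identify the Markov structure needed for the chain rule with Condition~\ref{cond:eat_chan_2} of Definition~\ref{def:eat_channels}, but in \cite{dupuis2016entropy} the Markov condition is a separate hypothesis on the conditional independence of past outputs from fresh side information, not a clause of the EAT-channel definition; the paper invokes \cite{9996821} precisely because that generalized version dispenses with this extra hypothesis.
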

    
    It has been shown that Theorem~\ref{thm:eat}, the generalized entropy accumulation theorem, holds regardless of whether the sequence of channels $\{\mathcal{N}_{j}\}_j$ satisfies a Markov-chain condition \cite[Appendix~A]{9996821}. Note that the original entropy accumulation theorem~\cite{dupuis2016entropy} does require such a Markov-chain condition.

\subsection{EAT Channels}\label{subsec:eat_channels}

    In this subsection, we prove that Protocol~\ref{pro:ent_test}, for all  $j\in[n]$, satisfies the conditions for EAT channels, $\mathcal{N}_{j}:R_{j-1}\rightarrow R_{j} O_{j} S_{j} W_{j}$ as outlined in Definition~\ref{def:eat_channels}. This is a necessary condition for the application of the entropy accumulation theorem~\cite{dupuis2016entropy}. 
    
    Consider Condition~\ref{cond:eat_chan_1} of Definition~\ref{def:eat_channels}. We can think of $R_j$ as a source distributing the states across the network. As far we are concerned, it is an arbitrary quantum system. $W_j$ is the classical register associated with determining whether the MABK game has been won. $W_j$ is finite dimensional as it can take only three values. $S_j$ consists of all the systems involved in the protocol, except for the system $\hat{A}_{[M^\prime],\,j}$ for some $M^\prime \in [M-1]$.
    
    For Condition~\ref{cond:eat_chan_2} of  Definition~\ref{def:eat_channels}, we can see that $W_j$ is determined using only the classical values $A_{i,j}$ and $X_{i,j}$ and does not affect the classical and quantum registers.
    
    Finally, we are left with Condition~\ref{cond:eat_chan_3} of Definition~\ref{def:eat_channels}. Checking these conditions is more involved. First, we note that Protocol~\ref{pro:ent_test} makes use of the MABK inequality, which involves two inputs and two outputs for all parties involved. This allows us to apply Jordan's lemma on every party. Now, we recall Jordan's lemma~\cite{scarani20124}:
    
    \begin{theorem}[Lemma 4.1 in~\cite{scarani20124}]\label{thm:jordan}
        Let $\hat{O}_0$ and $\hat{O}_1$ be two Hermitian operators with eigenvalues $-1$ and $+1$. Then there exists a basis in which both operators are block diagonal, in blocks of dimension $2\times 2$ at most.
    \end{theorem}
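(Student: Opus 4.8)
The plan is to reduce the statement to the spectral analysis of a single unitary. Since $\hat{O}_0$ and $\hat{O}_1$ are Hermitian with eigenvalues in $\{-1,+1\}$, they are involutions, $\hat{O}_0^2 = \hat{O}_1^2 = \mathbb{I}$, and in particular each is a self-inverse unitary. I would introduce the unitary $U \coloneqq \hat{O}_0\hat{O}_1$ and build the common block structure directly from its eigenspaces. Because $U$ is unitary, it admits a spectral decomposition $\mathcal{H} = \bigoplus_\theta E_\theta$ into eigenspaces $E_\theta$ with eigenvalue $e^{i\theta}$, and the plan is to show that each sector of this decomposition contributes blocks of dimension at most two that are simultaneously invariant under both observables.

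The key algebraic relation I would establish first is that conjugation by either observable inverts $U$. Indeed, using $\hat{O}_0 = \hat{O}_0^{-1}$,
\begin{equation}
    \hat{O}_0\, U\, \hat{O}_0 = \hat{O}_0 \hat{O}_0 \hat{O}_1 \hat{O}_0 = \hat{O}_1 \hat{O}_0 = (\hat{O}_0\hat{O}_1)^\dagger = U^{-1},
\end{equation}
and likewise $\hat{O}_1 U \hat{O}_1 = U^{-1}$. Two consequences follow: the spectrum of $U$ is symmetric under complex conjugation, so $E_\theta$ and $E_{-\theta}$ have equal dimension, and each observable maps $E_\theta$ isometrically onto $E_{-\theta}$, since $U\psi = e^{i\theta}\psi$ forces $U(\hat{O}_0\psi) = \hat{O}_0 U^{-1}\psi = e^{-i\theta}(\hat{O}_0\psi)$.

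With this in hand, I would assemble the blocks sector by sector. On the real-eigenvalue sectors $E_{+1}$ and $E_{-1}$ (that is, $\theta \in \{0,\pi\}$), the identity $\hat{O}_1 = \hat{O}_0 U$ gives $\hat{O}_1 = \pm\hat{O}_0$, so the two observables coincide up to sign and can be simultaneously diagonalized, yielding $1\times 1$ blocks. For each $\theta \in (0,\pi)$ I would pair $E_\theta$ with $E_{-\theta}$: fixing an orthonormal basis $\{\psi_k\}$ of $E_\theta$, each two-dimensional space $\mathrm{span}\{\psi_k,\hat{O}_0\psi_k\}$ is invariant under $\hat{O}_0$ by construction and under $\hat{O}_1 = \hat{O}_0 U$ by the eigenvalue relation above; since $\hat{O}_0$ is an isometry carrying $\{\psi_k\}$ to an orthonormal basis of $E_{-\theta}$, these planes are mutually orthogonal and exhaust $E_\theta \oplus E_{-\theta}$. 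Collecting the $1\times 1$ blocks from $E_{\pm 1}$ and the $2\times 2$ blocks from the paired sectors produces an orthonormal basis of $\mathcal{H}$ in which both $\hat{O}_0$ and $\hat{O}_1$ are block diagonal with blocks of dimension at most $2$, as claimed.

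The step I expect to require the most care is the treatment of the degenerate real-eigenvalue sectors together with the orthogonality bookkeeping for the two-dimensional blocks: one must verify that the planes built from distinct basis vectors of $E_\theta$ are orthogonal and that $E_{+1}$ and $E_{-1}$ are handled separately, where the blocks collapse to size one and the two observables commute. An alternative route that avoids the unitary entirely is to diagonalize the Hermitian anticommutator $\{\hat{O}_0,\hat{O}_1\} = \hat{O}_0\hat{O}_1 + \hat{O}_1\hat{O}_0$, which commutes with both observables; its eigenspaces are then jointly invariant, and within each the orbit $\mathrm{span}\{v,\hat{O}_1 v\}$ of any vector $v$ furnishes an invariant subspace of dimension at most two.
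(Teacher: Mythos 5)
The paper does not actually prove this statement: Jordan's lemma is imported as Lemma~4.1 of the cited reference and used as a black box, so there is no in-paper proof to compare against, and your proposal must be judged on its own merits. On those merits it is correct and complete, and it is essentially the standard argument. The algebraic core is right: for $U = \hat{O}_0\hat{O}_1$ you correctly get $\hat{O}_0 U \hat{O}_0 = \hat{O}_1 U \hat{O}_1 = U^{-1}$, hence $U(\hat{O}_0\psi) = e^{-i\theta}\hat{O}_0\psi$ for $\psi \in E_\theta$, so each observable swaps $E_\theta$ and $E_{-\theta}$. The planes $\mathrm{span}\{\psi_k, \hat{O}_0\psi_k\}$ are indeed jointly invariant (invariance under $\hat{O}_1$ follows from $\hat{O}_1\psi_k = \hat{O}_0 U\psi_k = e^{i\theta}\hat{O}_0\psi_k$ and $\hat{O}_1\hat{O}_0\psi_k = U^{-1}\psi_k = e^{-i\theta}\psi_k$), they are mutually orthogonal because $E_\theta \perp E_{-\theta}$ and $\hat{O}_0$ is an isometry, and they exhaust $E_\theta \oplus E_{-\theta}$ because $\hat{O}_0|_{E_\theta}$ maps onto $E_{-\theta}$; the real sectors $\theta \in \{0,\pi\}$, where $\hat{O}_1 = \pm\hat{O}_0$ and the blocks collapse to size one, are also handled correctly. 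The only flaw is in your closing side remark: within an eigenspace of the anticommutator $\{\hat{O}_0,\hat{O}_1\}$ with eigenvalue $c$, the subspace $\mathrm{span}\{v,\hat{O}_1 v\}$ for an \emph{arbitrary} vector $v$ is invariant under $\hat{O}_1$ but not, in general, under $\hat{O}_0$; you need $v$ to be an eigenvector of $\hat{O}_0$ restricted to that eigenspace, so that $\hat{O}_0 v = \pm v$ and $\hat{O}_0\hat{O}_1 v = cv - \hat{O}_1\hat{O}_0 v = cv \mp \hat{O}_1 v$ stays in the span. Since that remark is offered only as an alternative sketch, it does not affect the validity of your main proof.
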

    
     For each party $A_i$ and for every round $j$, we can reduce the associated binary observables $\hat{O}^{i,j}_0$ and $\hat{O}^{i,j}_1$ to a block-diagonal form in a suitable local basis using Theorem~\ref{thm:jordan}. The block-diagonal form is as follows:
    \begin{align}\label{eqn:jordan_blocks}
        \hat{O}^{i,j}_0 &= \bigoplus_{d_{i,j}}\, O_{0}^{d_{i,j}}= \bigoplus_{d_{i,j}}\,\sigma_{y}^{d_{i,j}},\\
        \hat{O}^{i,j}_1 &= \bigoplus_{d_{i,j}}\, O_{1}^{d_{i,j}} \\ 
        & =\bigoplus_{d_{i,j}}\left( \cos(\alpha_{d_{i,j}})\,\sigma_{y}^{d_{i,j}}+\sin(\alpha_{d_{i,j}})\,\sigma_{x}^{d_{i,j}}\right).
    \end{align}

Let $\Pi_{d_{i,j}} $ denote the projection onto the $d_{i,j}$th block.
    If we act with the projection $\Pi_{d_{i,j}} $ on $\rho_{\hat{A}_{[M],j}}$, then we obtain the index $d_{i,j}$ of the corresponding Jordan block. Hence, after the application of the projection, the system possessed by each party is a two-dimensional system. This is true of $\hat{A}_{i,j}$ especially, which satisfies Condition~\ref{cond:eat_chan_3}. After party $i$ performs the measurement $\{\Pi_{d_{i,j}}\}_{d_{i,j}}$ for round $j$, for all $i\in [M]$, the post-measurement state is as follows:
        \begin{equation}\label{eqn:projected_state}    \frac{\left(\bigotimes_{i=1}^{M}\Pi_{d_{i,j}}\right)\rho_{\hat{A}_{[M],j}} \left(\bigotimes_{i=1}^{M}\Pi_{d_{i,j}}\right)}{\operatorname{Tr}\left[\left(\bigotimes_{i=1}^{M}\Pi_{d_{i,j}}\right)\rho_{\hat{A}_{[M],j}}\right]}
        \otimes\left(\bigotimes_{i=1}^{M}|d_{i,j}\rangle\! \langle d_{i,j}|\right).
        \end{equation}
    Note that all quantum registers have been reduced to qubit registers.
    
    Using these projections, we produce Protocol~\ref{pro:ent_test_2}. Protocol~\ref{pro:ent_test_2} is a device-dependent version of Protocol~\ref{pro:ent_test} but has the same winning probability as Protocol~\ref{pro:ent_test}. This is due the fact that, regardless of the value of $T_j$ and for any $d_j=d_{1,j}\cdots d_{M,j}$, the underlying state can thought of as being in the following state~\cite[Theorem~1]{PRXQuantum.2.010308}:
    \begin{multline}\label{eqn:ghz_incoherent_state}
        \tilde{\rho}^{d_j} \coloneqq \sum_{u\in\{0,1\}^{M-1}} \Big[\lambda^{d_j}_{0,u}|\psi_{0,u}\rangle\!\langle \psi_{0,u}|  + \lambda^{d_j}_{1,u}|\psi_{1,u}\rangle\!\langle \psi_{1,u}| \\
         +\mathbf{i}s^{d_j}_{u}\left(|\psi_{0,u}\rangle\!\langle \psi_{1,u}|-|\psi_{1,u}\rangle\!\langle \psi_{0,u}|\right)\Big], 
    \end{multline}
    where $ \mathbf{i}^2=-1$, $s^{d_j}_{u}, \lambda^{d_j}_{0,u}\in [0,1]$, and $\psi_{0,u}, \psi_{1,u}$ are defined in Definition~\ref{def:ghz_basis}. Hence, the projective measurement detailed earlier will neither change the winning probability nor the amount of multipartite entanglement produced by the source. (This is true for any two-input two-output, or ($M$, 2, 2), full-correlator Bell inequality~\cite[Theorem~1]{PRXQuantum.2.010308}, like the MABK inequality we consider here.) 
    
    We will use Protocol~\ref{pro:ent_test_2} only to make statements about the soundness of Protocol~\ref{pro:ent_test}. 
    First, we need to clarify some notation. As before, the index~$i$ will denote the party, and index~$j$ will denote the round of the protocol. $\hat{A}_{[M]\,[n]}$ refers to all the quantum registers $\hat{A}_{i,j}$ possessed by the respective parties at the end of the protocol before conditioning on the outcome of the protocol. $A_{[M]\,[n]}$ refers to the classical registers $A_{i,j}$ containing the inputs of the protocol used by the parties in each round, regardless of the $T_j$ of said rounds. $D_{[M]\,[n]}$ contains the results of the Jordan block projection $D_{i,j}$ performed in each round of the protocol. $X_{[M]\,[n]}$ refers to the classical registers containing output $X_{i,j}$ from all the rounds of the protocol. For convenience, we shall refer to $A_{[M]\,[n]}$, $D_{[M]\,[n]}$, and $X_{[M]\,[n]}$ together as $(ADX)_{[M]\,[n]}$. $\overline{W}$ and $T$ refer to the classical registers containing $W_{j}$ and $T_{j}$, respectively, from all the rounds of the protocol. 

    \begin{figure}
    \begin{minipage}{0.95\linewidth}
        \begin{algorithm}[H]
            \caption{: DIMEC protocol based on the MABK game}\label{pro:ent_test_2}
            \begin{algorithmic}[1]
                \STATEx \textbf{Arguments:}
                \STATEx\hspace{\algorithmicindent} $\mathcal{M}$ -- untrusted measurement device of two components, with inputs and outputs in the set $\{0,1\}$
                \STATEx\hspace{\algorithmicindent} $n \in \mathbb{N}_+$ -- number of rounds
                \STATEx\hspace{\algorithmicindent} $\gamma$ -- the probability of conducting a test
                \STATEx\hspace{\algorithmicindent} $\omega_{\mathrm{exp}}$ -- expected winning probability in the MABK game
                \STATEx\hspace{\algorithmicindent} $\delta_{\mathrm{est}} \in (0,1)$ -- width of the statistical confidence interval for the estimation test
                \STATEx\hspace{\algorithmicindent} $A_{i,j}$ indicates a classical register belonging to the $i$-th party and $j$-th round.
                \STATEx
                \STATE For every round $j\in[n]$, do steps~\ref{prostep:first_2}-\ref{prostep:last_2}:
                \STATE\hspace{\algorithmicindent}Let $\phi^j$ denote the multipartite state produced by the source in this round. \label{prostep:first_2}
                \STATE\hspace{\algorithmicindent}Set $A_{i,j},X_{i,j},W_j=\, \perp$ for all $ i \in [M] $.
                \STATE\hspace{\algorithmicindent}Choose $T_j=1$ with probability $\gamma$ and $T_j=0$ with probability $1-\gamma$.
                \STATE\hspace{\algorithmicindent}If $T_j=1$:
                \STATE\hspace{\algorithmicindent} Choose inputs $X_{i,j}\in\{0,1\}$ uniformly at random.
                \STATE\hspace{\algorithmicindent} Measure $\phi^j$ using $\mathcal{M}$ with the inputs $X_{i,j}$ and record outputs $A_{i,j}\in\{0,1\}$.
                \STATE\hspace{\algorithmicindent} Set $W_j = 1$ if the MABK game is won and $W_j=0$ otherwise.
                \STATE\hspace{\algorithmicindent}If $T_j=0$:
                \STATE\hspace{\algorithmicindent} Apply projections described in~\eqref{eqn:projected_state}.
                \STATE\hspace{\algorithmicindent} Keep $\phi^j$ in the registers $\bigotimes_{i=1}^{m}\hat{A}_{i,j}$. \label{prostep:last_2}
                \STATE Abort if $W =\sum_{j=1}^n \chi(T_j=1)\,W_j < (\omega_{\mathrm{exp}}\gamma - \delta_{\mathrm{est}}) n$. \label{prostep:abort_ET_2}
            \end{algorithmic}
        \end{algorithm}
        \end{minipage}
    \end{figure}

\subsection{Max-Tradeoff Function}\label{subsec:max_tradeoff}

    In this section, we obtain a max-tradeoff function that satisfies Definition~\ref{def:max_tradeoff_func}. We are interested in finding an upper bound on the following quantity:
    \begin{equation}
        \sup_{\sigma}H\!\left(\hat{A}_{[M^\prime],j} \middle\vert \hat{A}_{[M^\prime+1,M],j}\,(ADX)_{[M],j}\,T_j\right)_{\mathcal{N}_{j}(\sigma)},
        \label{eq:cond-ent-proof}
    \end{equation}
    for all $j\in[n]$ and $M^\prime\in[M-1]$, where the supremum is taken over all input states of $\mathcal{N}_{j}$ for which the marginal on $W_{j}$ of the output state is the probability distribution~$p$. Henceforth, we shall refer this set of states as 
    \begin{equation}\label{eqn:sigma_p}
        \Sigma_p \coloneqq\left\{\sigma\,\middle |\,\mathcal{N}_{j}(\sigma)_{W_j}=p\right\}.
    \end{equation}

    We will first simplify~\eqref{eq:cond-ent-proof}. Note that the following equality holds:
    \begin{equation}
    \label{eq:cond-ent-cq}
        \sum_x p(x) H(A|B)_{\rho^x} =  H(A|BX)_{\tilde{\rho}},
    \end{equation}
    where $\{p(x)\}_x$ is a probability distribution, $\{\rho^x_{AB}\}_x$ is a set of states, $\tilde{\rho}_{ABX} \coloneqq \sum_x p(x) \rho_{AB}^x \otimes |x\rangle\!\langle x|$, and $\{|x\rangle\}_{x}$ is an orthonormal basis. Using this, we find that
    \begin{align}
        &\sup_{\sigma\in\Sigma_p}H\!\left(\hat{A}_{[M^\prime],j} \middle\vert \hat{A}_{[M^\prime+1,M],j}\,(ADX)_{[M],j}\,T_j\right)_{\mathcal{N}_{j}(\sigma)} = \notag\\
        &\sup_{\sigma\in\Sigma_p} \Bigg[\mathrm{Pr}(T_j=0)H\!\left(\hat{A}_{[M^\prime],j} \middle\vert \hat{A}_{[M^\prime+1,M],j}\,(ADX)_{[M],j}\right)_{\mathcal{N}_{j}^{0}(\sigma)}\notag\\
        &+\mathrm{Pr}(T_j=1)H\!\left(\hat{A}_{[M^\prime],j} \middle\vert \hat{A}_{[M^\prime+1,M],j}\,(ADX)_{[M],j}\right)_{\mathcal{N}_{j}^{1}(\sigma)}\Bigg],
    \end{align}
    where $\mathcal{N}_{j}^{0}(\sigma)$ corresponds to the state when $T_j=0$, and $\mathcal{N}_{j}^{1}(\sigma)$ corresponds to the state when $T_j=1$. Note that $T_j=1$ corresponds to a round where we apply measurements to play the MABK game and each $\hat{A}_{i,j}$ is in a deterministic state for all $i\in[M]$. This implies that 
    \begin{equation}
    H\!\left(\hat{A}_{[M^\prime],j} \middle\vert \hat{A}_{[M^\prime+1,M],j}\,(ADX)_{[M],j}\right)_{\mathcal{N}_{j}^{1}(\sigma)} = 0.
    \end{equation}
    Hence,
    \begin{multline}
        \sup_{\sigma\in\Sigma_p}H\!\left(\hat{A}_{[M^\prime],j} \middle\vert \hat{A}_{[M^\prime+1,M],j}\,(ADX)_{[M],j}\,T_j\right)_{\mathcal{N}_{j}(\sigma)}= \\
        (1-\gamma)\sup_{\sigma\in\Sigma_p}H\!\left(\hat{A}_{[M^\prime],j} \middle\vert \hat{A}_{[M^\prime+1,M],j}\,(ADX)_{[M],j}\right)_{\mathcal{N}_{j}^{0}(\sigma)}.
    \end{multline}
    
    To calculate an upper bound on the above quantity, we will use the fact that the conditional entropy is concave \cite{10.1063/1.1666274}:
    \begin{equation}
        \sum_x p(x) H(A|B)_{\rho^x}\leq H(A|B)_{\bar{\rho}},
    \end{equation}
    where the notation is the same as in \eqref{eq:cond-ent-cq} and $\bar{\rho}_{AB} \coloneqq \sum_x p(x) \rho_{AB}^x$. 
    We will also make use of the bipartition between systems represented by the vertical bar in~\eqref{eq:cond-ent-proof}, which means that we can make use of local operations with respect to this bipartition. Given a particular $D_{[M],j}$ at the end of the $j$th round of Protocol~\ref{pro:ent_test_2}, the state $\tilde{\rho}^{d_j}$ in~\eqref{eqn:ghz_incoherent_state} has a simple structure in the GHZ basis. We can rewrite the GHZ basis, from Definition~\ref{def:ghz_basis}, in the following fashion: 
    \begin{equation}\label{eqn:def_ghz_basis_alt}
        |\psi_{v,u,w}\rangle \equiv \frac{1}{\sqrt{2}}\left[|w,u\rangle +(-1)^v |\bar{w},\bar{u}\rangle\right],
    \end{equation}
    where $v\in \{0,1\}$, while $u\in\{0,1\}^{M-M^\prime}$ and $\bar{u}=1\oplus u$ are $M-M^\prime$ bit strings, and $w\in\{0,1\}^{M^\prime}$ and $\bar{w}=1\oplus w$ are $M^\prime$ bit strings. 
    
    Within the partition denoted by $u$, we can choose one qubit to be the control qubit, $u_1$, and then apply CNOTs on all the other qubits. We apply similar set of operations on the partition denoted by $w$ with the first qubit containing $w_1$ as the control qubit. These operations transform the $M$-partite GHZ basis states into the following state:
    \begin{multline}
        \frac{1}{2}\left(|u_1,w_1\rangle +(-1)^v |\bar{u}_1,\bar{w}_1\rangle\right)\otimes|u_2\cdots u_{M^\prime}\rangle\\ \otimes|w_2\cdots w_{M-M^\prime}\rangle, 
    \end{multline}
    where $u_i, w_i\in\{0,1\}$. The above state is a Bell-basis state between the control qubits containing $u_1$ and $w_1$. After the CNOTs, as described above, are applied on the state $\tilde{\rho}^{d_j}$ in~\eqref{eqn:ghz_incoherent_state}, there are still off-diagonal terms in the resulting state. To make the final state diagonal in the Bell basis, we use the following twirling channel \cite{PhysRevA.54.3824}:
    \begin{multline}
        \mathcal{T}(\rho)=\frac{1}{4}\Big(\rho + (X\otimes X)\rho\,(X\otimes X) +\\ (Y\otimes Y)\rho\,(Y\otimes Y) +  (Z\otimes Z)\rho\,(Z\otimes Z)\Big).
    \end{multline}
    Now that we have simplified the underlying state, we make the following observation about the MABK inequality.
    \begin{lemma}\label{thm:mabk_2_bell}
    An $M$-partite MABK inequality~\eqref{eqn:inequality_mabk} is equivalent to the CHSH inequality within any bipartition consisting of $M^\prime$ parties on  one side and the $M -M^\prime$ other parties on the other side such that, for some $\hat{O}^{[M-M^\prime+1]}_0$ and $\hat{O}^{[M-M^\prime+1]}_1$, $\mathcal{K}_M$ is equal to
    \begin{multline}
        \frac{1}{2}\mathcal{F}\!\left(\mathcal{K}_{M-M^\prime},\overline{\mathcal{K}}_{M-M^\prime},\hat{O}^{[M-M^\prime+1]}_0, \hat{O}^{[M-M^\prime+1]}_1\right).
    \end{multline}
    \end{lemma}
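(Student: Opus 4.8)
The plan is to prove the operator identity by induction on $M^\prime$, the number of parties grouped on the second side of the bipartition, using only the defining recursion \eqref{eqn:def_mabk_operator}--\eqref{eqn:def_mabk_operator-2} for $\mathcal{K}_M$ together with the rule that barring swaps $\hat{O}_0^i\leftrightarrow\hat{O}_1^i$ on every party (so that $\overline{\overline{\mathcal{K}}}=\mathcal{K}$). The base case $M^\prime=1$ is immediate: it is exactly the defining relation \eqref{eqn:def_mabk_operator}, with the one-party block observables identified as $\hat{O}^{[M]}_0=\hat{O}_0^M$ and $\hat{O}^{[M]}_1=\hat{O}_1^M$. Once the identity is in hand, the advertised equivalence to CHSH follows at once, because $\tfrac{1}{2}\mathcal{F}(\cdot,\cdot,\cdot,\cdot)$ is precisely the CHSH operator form (compare $\mathcal{K}_2$ in \eqref{eqn:def_mabk_operator-2}): the two ``effective parties'' are the group of $M-M^\prime$ parties carrying the dichotomic pair $\mathcal{K}_{M-M^\prime},\overline{\mathcal{K}}_{M-M^\prime}$ and the group of $M^\prime$ parties carrying the pair $\hat{O}^{[M-M^\prime+1]}_0,\hat{O}^{[M-M^\prime+1]}_1$.

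For the inductive step, I would assume the identity for a block of $M^\prime$ parties and then expand the two operators $\mathcal{K}_{M-M^\prime}$ and $\overline{\mathcal{K}}_{M-M^\prime}$ by one more application of the recursion, peeling off party $M-M^\prime$. Writing $K\coloneqq\mathcal{K}_{M-M^\prime-1}$, $\overline{K}\coloneqq\overline{\mathcal{K}}_{M-M^\prime-1}$, $P_b\coloneqq\hat{O}_b^{M-M^\prime}$, and $Q_b\coloneqq\hat{O}^{[M-M^\prime+1]}_b$, the recursion gives $\mathcal{K}_{M-M^\prime}=\tfrac{1}{2}\bigl[K\otimes(P_0+P_1)+\overline{K}\otimes(P_0-P_1)\bigr]$, and applying the bar rule yields $\overline{\mathcal{K}}_{M-M^\prime}=\tfrac{1}{2}\bigl[\overline{K}\otimes(P_0+P_1)-K\otimes(P_0-P_1)\bigr]$. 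Substituting both into the inductive hypothesis and expanding the outer $\mathcal{F}$ against $Q_0,Q_1$ produces four tensor terms; regrouping them by their $K$ and $\overline{K}$ coefficients, the cross terms collapse so that the coefficient of $K$ becomes $P_0\otimes Q_1+P_1\otimes Q_0$ and the coefficient of $\overline{K}$ becomes $P_0\otimes Q_0-P_1\otimes Q_1$.

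The final step is to read these two coefficient operators as $R_0+R_1$ and $R_0-R_1$ for the effective block observables
\begin{align}
    \hat{O}^{[M-M^\prime]}_0 &= R_0 = \tfrac{1}{2}\mathcal{F}(P_0,P_1,Q_0,Q_1),\\
    \hat{O}^{[M-M^\prime]}_1 &= R_1 = \tfrac{1}{2}\mathcal{F}(P_1,P_0,Q_1,Q_0),
\end{align}
which immediately gives $\mathcal{K}_M=\tfrac{1}{2}\mathcal{F}(K,\overline{K},R_0,R_1)$, the desired identity for a block of $M^\prime+1$ parties. I would then verify that $R_1=\overline{R_0}$, i.e., that the block observables are themselves the MABK operator and its bar for the enlarged block; this closes the recursion cleanly and makes the construction manifestly self-consistent for every $M^\prime\in[M-1]$ down to the base case.

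The hard part will be purely the bilinear bookkeeping in the inductive step: one must carry the four tensor terms through the nested $\mathcal{F}$'s and check that the sign structure built into \eqref{eqn:def_bell} routes the ``diagonal'' products $P_0\otimes Q_0,\,P_1\otimes Q_1$ into the $\overline{K}$ channel and the ``off-diagonal'' products $P_0\otimes Q_1,\,P_1\otimes Q_0$ into the $K$ channel, with the correct relative signs. The only genuinely conceptual point requiring care is the behavior of the bar operation under the recursion — that barring the whole operator equals barring each recursive factor while swapping the last pair of observables — since this is exactly what guarantees $R_1=\overline{R_0}$ and hence that the effective observables remain a legitimate MABK pair at every level of the collapse.
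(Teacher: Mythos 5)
Your strategy coincides with the paper's own proof in Appendix~\ref{proof:mabk_2_bell}: induct on the block size, peel one party off the MABK factor using the recursion \eqref{eqn:def_mabk_operator} together with the bar rule $\overline{\mathcal{K}}_{M-M^\prime}=\tfrac{1}{2}\bigl[\overline{K}\otimes(P_0+P_1)-K\otimes(P_0-P_1)\bigr]$, regroup the four tensor terms, and read off effective block observables; the base case is the defining recursion in both. The difference is in the \emph{outcome} of the regrouping, and yours is the correct one. The paper concludes that the coefficient of $\overline{\mathcal{K}}$ is $P_1\otimes Q_0-P_0\otimes Q_1$, so that the block observables may be taken to be the dichotomic tensor products $P_1\otimes Q_0$ and $P_0\otimes Q_1$; this traces to a sign slip when collecting the $\overline{\mathcal{K}}_{M-2}$ terms (the factor $\hat{O}_0^{M-1}-\hat{O}_1^{M-1}$ appearing in its first displayed line becomes $\hat{O}_1^{M-1}-\hat{O}_0^{M-1}$ in the next line), and a similar inconsistency recurs in its inductive step. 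Your coefficients, $P_0\otimes Q_1+P_1\otimes Q_0$ for $\mathcal{K}$ and $P_0\otimes Q_0-P_1\otimes Q_1$ for $\overline{\mathcal{K}}$, are what the algebra actually gives. A decisive check is $M=3$, $M^\prime=2$ (with the convention $\mathcal{K}_1=\hat{O}_0^1$, $\overline{\mathcal{K}}_1=\hat{O}_1^1$ forced by unravelling the recursion to the last party): writing $A_x,B_y,C_z$ for the three parties' observables, the recursion yields $\mathcal{K}_3=\tfrac{1}{2}\left(A_0B_0C_1+A_0B_1C_0+A_1B_0C_0-A_1B_1C_1\right)$, which equals your decomposition $\tfrac{1}{2}\bigl[A_0\otimes(B_0C_1+B_1C_0)+A_1\otimes(B_0C_0-B_1C_1)\bigr]$ but not the paper's $\tfrac{1}{2}\bigl[A_0\otimes(B_0C_1+B_1C_0)+A_1\otimes(B_1C_0-B_0C_1)\bigr]$.

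Two further remarks. First, your closing step of verifying $R_1=\overline{R_0}$ is exactly what keeps the induction hypothesis in force; it follows from the bar rule together with $Q_1=\overline{Q_0}$, and, using the symmetry of $\mathcal{F}$ under exchange of its two argument pairs, it identifies the block pair as the MABK operator of the block and its bar. The lemma as literally stated (``for some $\hat{O}^{[M-M^\prime+1]}_0$ and $\hat{O}^{[M-M^\prime+1]}_1$'') therefore survives under your proof. Second, however, note what your correct derivation makes visible: the block operators $R_0=\tfrac{1}{2}\mathcal{F}(P_0,P_1,Q_0,Q_1)$ and $R_1=\overline{R_0}$ are \emph{not} dichotomic observables---their operator norm can reach $\sqrt{2}$---unlike the tensor-product observables the paper's (flawed) regrouping would have delivered. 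Consequently the informal reading of the lemma as an equivalence with a genuine CHSH scenario across the bipartition, and in particular the downstream application of Lemma~\ref{thm:bell_diag}, which presumes binary measurements on each side, requires an additional argument that neither your proof nor the paper's supplies. That gap concerns the lemma's interpretation and use, not your algebra, which is sound.
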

\begin{proof}
    See Appendix~\ref{proof:mabk_2_bell}. 
\end{proof}
\medskip
    
    Given the above operations and Lemma~\ref{thm:mabk_2_bell}, we can use a prior result from \cite[Lemma 14]{Arnon_Friedman_2019} to obtain our max-tradeoff function. We can replace the CHSH violation $\beta_2$ in~\eqref{eqn:inequality_chsh_limits} with the multipartite MABK violation $\beta_M$ in~\eqref{eqn:inequality_mabk_limits} for our case. We restate \cite[Lemma 14]{Arnon_Friedman_2019} for convenience.
    
    \begin{lemma}[Lemma 14 of \cite{Arnon_Friedman_2019}]\label{thm:bell_diag}
        For every Bell-diagonal state $\sigma_{\hat{A}\hat{B}}$ that can be used to violate the CHSH inequality with violation $\beta_2\in[2,2\sqrt{2}]$, the following inequality holds:
        \begin{equation}
            H(\hat{A}|\hat{B}) \leq 2h_2\!\left(\frac{1}{2}-\frac{\beta_2}{4\sqrt{2}}\right)-1,
        \end{equation}
        where $h_2(p) \coloneqq -p \log_2 p - (1-p) \log_2(1-p)$ is the binary entropy function and $\beta_2$ is defined in~\eqref{eqn:inequality_chsh_limits}.
    \end{lemma}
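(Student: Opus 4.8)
The plan is to reduce the conditional entropy to a Shannon entropy of the Bell-diagonal weights and then control that Shannon entropy using the CHSH value through the Horodecki criterion. First I would write an arbitrary Bell-diagonal state as $\sigma_{\hat{A}\hat{B}}=\sum_{z,x\in\{0,1\}}\lambda_{z,x}\,|\Phi_{z,x}\rangle\!\langle\Phi_{z,x}|$, where $z$ and $x$ label the phase-flip and bit-flip type of the four Bell states, and exploit the fact that every Bell-diagonal state has maximally mixed marginals. This gives $H(\hat{B})_{\sigma}=1$ and $H(\hat{A}\hat{B})_{\sigma}=H(\{\lambda_{z,x}\})$, so that $H(\hat{A}|\hat{B})_{\sigma}=H(ZX)-1$, where $(Z,X)$ is the classical pair with joint distribution $\lambda_{z,x}$. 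It therefore suffices to prove $H(ZX)\leq 2h_2(q)$ with $q\coloneqq \tfrac{1}{2}-\tfrac{\beta_2}{4\sqrt{2}}$.

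Next I would relate $\beta_2$ to the three diagonal correlation coefficients $c_1,c_2,c_3$ of $\sigma$. A short computation shows that the biases $\Pr[\cdot=0]-\Pr[\cdot=1]$ of the single bits $Z$, $X$, and $Z\oplus X$ are exactly $c_1$, $c_3$, and $-c_2$, so each of these single-bit entropies equals $h_2\!\left(\tfrac{1-|c_\bullet|}{2}\right)$, which is a decreasing function of $|c_\bullet|$. By the Horodecki criterion, the maximal CHSH value is $\beta_2=2\sqrt{c_a^2+c_b^2}$, where $c_a,c_b$ are the two largest among $|c_1|,|c_2|,|c_3|$. The heart of the bound is then subadditivity: since any two of $Z$, $X$, $Z\oplus X$ determine the pair $(Z,X)$ bijectively, I have $H(ZX)=H(U,V)\leq H(U)+H(V)$ for the pair $(U,V)$ whose two biases are precisely the two largest $|c_i|$; because $h_2\!\left(\tfrac{1-|c|}{2}\right)$ decreases in $|c|$, this choice yields the tightest subadditive bound, namely $H(ZX)\leq h_2\!\left(\tfrac{1-|c_a|}{2}\right)+h_2\!\left(\tfrac{1-|c_b|}{2}\right)$ with $c_a^2+c_b^2=\beta_2^2/4$.

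Finally I would invoke concavity of the map $t\mapsto h_2\!\left(\tfrac{1-\sqrt{t}}{2}\right)$ on $[0,1]$ to conclude $h_2\!\left(\tfrac{1-|c_a|}{2}\right)+h_2\!\left(\tfrac{1-|c_b|}{2}\right)\leq 2h_2\!\left(\tfrac{1-\beta_2/(2\sqrt{2})}{2}\right)=2h_2(q)$, using $(c_a^2+c_b^2)/2=\beta_2^2/8$. Combining with the first step yields $H(\hat{A}|\hat{B})_{\sigma}\leq 2h_2(q)-1$, and one checks the bound is saturated by the product distribution $\lambda_{z,x}=\lambda^{Z}_z\lambda^{X}_x$ with each marginal $(q,1-q)$. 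The main obstacle is this last concavity step: verifying that $t\mapsto h_2\!\left(\tfrac{1-\sqrt{t}}{2}\right)$ has non-positive second derivative throughout $[0,1]$ is what forces the symmetric optimizer $c_a=c_b=\beta_2/(2\sqrt{2})$ and pins down the constant $q$, and the square-root composition makes the sign delicate near the endpoints. A secondary point requiring care is the bookkeeping that guarantees the pair of coefficients entering $\beta_2$ is the same pair giving the tightest subadditivity bound, which rests on the monotonicity of $h_2\!\left(\tfrac{1-|c|}{2}\right)$ in $|c|$.
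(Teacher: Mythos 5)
First, a point of comparison: the paper itself does not prove this lemma at all---it is imported verbatim as Lemma~14 of \cite{Arnon_Friedman_2019} and used as a black box---so your argument can only be judged on its own merits and against the original reference, and on its own merits it is correct; it is also the standard route for this kind of statement. Each link in your chain holds: Bell-diagonal states have maximally mixed marginals, giving $H(\hat{A}|\hat{B})_{\sigma}=H(\{\lambda_{z,x}\})-1$; the biases of $Z$, $X$, and $Z\oplus X$ are indeed the three diagonal correlation coefficients (up to signs, which are irrelevant once absolute values are taken); and any two of these three bits determine $(Z,X)$, so subadditivity may be applied to the pair with the two largest biases, which is exactly the pair entering the Horodecki criterion. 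The concavity step you flag as the main obstacle does go through: writing $g(t)\coloneqq h_2\!\left(\frac{1-\sqrt{t}}{2}\right)$, one computes
\begin{equation}
    g'(t)=-\frac{1}{2\ln 2}\cdot\frac{\operatorname{arctanh}\!\left(\sqrt{t}\right)}{\sqrt{t}},
\end{equation}
and since $\operatorname{arctanh}(u)/u=\sum_{k\geq 0}u^{2k}/(2k+1)$ is increasing on $(0,1)$ while $u=\sqrt{t}$ is increasing in $t$, the derivative $g'$ is decreasing, hence $g$ is concave on $[0,1]$; Jensen's inequality then forces the symmetric optimizer $|c_a|=|c_b|=\beta_2/(2\sqrt{2})$ exactly as you claim. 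One wording fix is needed: because the hypothesis is only that the state \emph{can} be used to achieve violation $\beta_2$, the Horodecki criterion gives $\beta_2\leq 2\sqrt{c_a^2+c_b^2}$ rather than equality; your argument still closes because the right-hand side $2h_2\!\left(\frac{1}{2}-\frac{\beta_2}{4\sqrt{2}}\right)-1$ is decreasing in $\beta_2$ on $[2,2\sqrt{2}]$, so it suffices to establish the bound at the maximal violation. With that monotonicity remark made explicit, your proof is complete, and your saturation example (independent bits with equal biases $\beta_2/(2\sqrt{2})$) is also correct.
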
 

    These lemmas can be applied to Protocol~\ref{pro:ent_test_2} in the following manner. 
    Using Lemma~\ref{thm:mabk_2_bell} and Lemma~\ref{thm:bell_diag}, we find that
    \begin{multline}\label{eqn:max_tradeoff_interm}
        \sup_{\sigma\in\Sigma_p}H\!\left(\hat{A}_{[M^\prime],j} \middle\vert \hat{A}_{[M^\prime+1,M],j}\,(ADX)_{[M],j}\,T_j\right)_{\mathcal{N}_{j}(\sigma)} \\
        \leq(1-\gamma)\left(2h_2\!\left(\frac{1}{2}-\frac{\beta_M}{4\sqrt{2}}\right)-1\right),
    \end{multline}
    where $\beta_M$ is defined in~\eqref{eqn:inequality_mabk_limits}. Using \eqref{eqn:max_tradeoff_interm}, we can state the following theorems.
    \begin{theorem}\label{thm:max_func_even}
        For all $M^\prime\in[M]$ with $M$ even and $\omega\in [p^e_{\min},p^e_{\max}]$ as defined in~\eqref{eqn:p_win_even}, respectively, let $\Sigma_p$ (defined in~\eqref{eqn:sigma_p}) be such that the winning probability is strictly greater than $\omega$ (i.e., $p(1)/\gamma>\omega$). Then,
        \begin{multline}\label{eqn:max_func_almost}
            \sup_{\sigma\in\Sigma_p}H\!\left(\hat{A}_{[M^\prime],j} \middle\vert \hat{A}_{[M^\prime+1,M],j}\,(ADX)_{[M],j}\,T_j\right)_{\mathcal{N}_{j}(\sigma)} \\ \leq (1-\gamma)\cdot g_e(\omega),
        \end{multline}
        where $\gamma $ is defined in Protocol~\ref{pro:ent_test}, 
        \begin{equation}\label{eqn:bin_ent_even}
            g_e(\omega) \coloneqq 2h_2\!\left(\frac{1}{2}-\frac{2\omega-1}{\sqrt{2}}\right)- 1,
        \end{equation}
        and $h_2(p)$ is the binary entropy function.
    \end{theorem}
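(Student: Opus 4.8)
The plan is to start from the bound already established in~\eqref{eqn:max_tradeoff_interm}, which controls the supremum of the conditional entropy by $(1-\gamma)\bigl(2h_2(\frac{1}{2} - \frac{\beta_M}{4\sqrt{2}}) - 1\bigr)$, and then to re-express the MABK violation $\beta_M$ in terms of the game-winning probability. The only missing ingredient is an explicit relation between $\beta_M$ and $\omega$. For even $M$, the winning probability is an affine function of $\beta_M$, pinned down by its two endpoints: the minimal genuinely multipartite violation $\beta_M = 2$ corresponds to $\omega = p^e_{\min} = 3/4$, and the maximal (Tsirelson) violation $\beta_M = 2\sqrt{2}$ corresponds to $\omega = p^e_{\max} = (2+\sqrt{2})/4$, as recorded in~\eqref{eqn:inequality_mabk_limits} and~\eqref{eqn:p_win_even}. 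Solving the two resulting linear equations gives $\omega = \beta_M/8 + 1/2$, equivalently $\beta_M = 8\omega - 4$.

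First I would substitute $\beta_M = 8\omega - 4$ into the right-hand side of~\eqref{eqn:max_tradeoff_interm}. The argument of the binary entropy then simplifies, since $\frac{\beta_M}{4\sqrt{2}} = \frac{8\omega - 4}{4\sqrt{2}} = \frac{2\omega - 1}{\sqrt{2}}$, so that $2h_2(\frac{1}{2} - \frac{\beta_M}{4\sqrt{2}}) - 1$ becomes exactly $g_e(\omega)$ as defined in~\eqref{eqn:bin_ent_even}. This algebraic identification is the heart of the statement and needs only the affine substitution.

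It then remains to handle the supremum together with the strict inequality $p(1)/\gamma > \omega$ defining $\Sigma_p$ in~\eqref{eqn:sigma_p}. Here I would invoke monotonicity: over the relevant range $\beta_M \in [2, 2\sqrt{2}]$, the argument $\frac{1}{2} - \frac{\beta_M}{4\sqrt{2}}$ stays in $[0, \frac{1}{2} - \frac{1}{2\sqrt{2}}] \subset [0, \frac{1}{2}]$, on which $h_2$ is monotonically increasing; consequently the per-state bound $2h_2(\frac{1}{2} - \frac{\beta_M}{4\sqrt{2}}) - 1$ is monotonically decreasing in $\beta_M$, hence decreasing in the winning probability. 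Every $\sigma \in \Sigma_p$ has winning probability $p(1)/\gamma > \omega$ and therefore violation $\beta_M > 8\omega - 4$, so its entropy bound lies strictly below $g_e(\omega)$; taking the supremum preserves the non-strict inequality and yields the claimed upper bound $(1-\gamma)\,g_e(\omega)$.

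The main obstacle I anticipate is justifying the affine relation $\beta_M = 8\omega - 4$ rigorously, rather than merely interpolating between two endpoints: one must argue from the explicit form of the MABK operator in~\eqref{eqn:mabk_operator_rewrite} and the winning predicate in~\eqref{eqn:winning_condition_MABK_3} that the winning probability is genuinely affine in $\operatorname{Tr}[\mathcal{K}_M \rho]$, so that its values at the two endpoints determine it everywhere on $[p^e_{\min}, p^e_{\max}]$. Once this linearity is secured, the substitution and monotonicity steps that follow are routine.
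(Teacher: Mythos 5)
Your proposal is correct and takes essentially the same approach as the paper: the paper's entire proof consists of quoting the relation $\beta_M = 8\omega - 4$ for even $M$ from~\cite{ribeiro2017MABK} and substituting it into~\eqref{eqn:max_tradeoff_interm}, exactly as you do. The affineness issue you flag as the main obstacle is precisely what the paper outsources to that citation (it follows from the XOR-game structure of the winning condition~\eqref{eqn:winning_condition_MABK_3} together with the operator form~\eqref{eqn:mabk_operator_rewrite}), and your endpoint interpolation, algebraic simplification, and monotonicity treatment of the supremum are consistent with, and slightly more detailed than, the paper's argument.
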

    \begin{proof}
        When $M$ is even, the MABK game winning probability and the MABK violation in~\eqref{eqn:inequality_mabk} are related as follows~\cite{ribeiro2017MABK}: $\beta_M= 8\omega -4$. Substituting $\beta_M= 8\omega -4$ in~\eqref{eqn:max_tradeoff_interm}, we get the required result.
    \end{proof}
    \medskip
    
    \begin{theorem}\label{thm:max_func_odd}
        For all $M^\prime\in[M]$ with $M$ odd and $\omega\in [p^o_{\min},p^o_{\max}]$ as defined in~\eqref{eqn:p_win_odd}, respectively, let $\Sigma_p$ (defined in~\eqref{eqn:sigma_p}) be such that the winning probability is strictly greater than $\omega$ (i.e., $p(1)/\gamma>\omega$). Then,
        \begin{multline}\label{eqn:max_func_odd}
            \sup_{\sigma\in\Sigma_p}H\!\left(\hat{A}_{[M^\prime],j} \middle\vert \hat{A}_{[M^\prime+1,M],j}\,(ADX)_{[M],j}\,T_j\right)_{\mathcal{N}_{j}(\sigma)} \\ \leq (1-\gamma)\cdot g_o(\omega),
        \end{multline}
        where $\gamma $ is defined in Protocol~\ref{pro:ent_test},
        \begin{equation}\label{eqn:bin_ent_odd}
            g_o(\omega) \coloneqq 2 h_2\!\left(\frac{1}{2}-\frac{4\omega-1}{2}\right)- 1,
        \end{equation}
        and $h_2(p)$ is the binary entropy function.
    \end{theorem}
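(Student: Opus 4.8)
The plan is to treat this as the odd-$M$ counterpart of Theorem~\ref{thm:max_func_even}, whose proof reduces to substituting the affine relation between the MABK violation $\beta_M$ and the winning probability $\omega$ into the general bound~\eqref{eqn:max_tradeoff_interm}. The very same strategy applies here: the only ingredient that changes is the precise form of the $\beta_M$--$\omega$ relation, which differs between the even and odd cases because the constants $p_{\min}$ and $p_{\max}$ in~\eqref{eqn:p_min}--\eqref{eqn:p_max}, and hence the calibration of the game, depend on the parity of $M$.

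First I would fix the relation between $\beta_M$ and $\omega$ for odd $M$. Following~\cite{ribeiro2017MABK}, the winning probability of the MABK game in~\eqref{eqn:winning_condition_MABK_3} is an affine function of the MABK violation, and for odd $M$ this relation reads $\beta_M = 2\sqrt{2}\,(4\omega - 1)$. I would sanity-check this against the endpoints in~\eqref{eqn:p_win_odd}: at $\omega = p^o_{\min} = (2+\sqrt{2})/8$ it yields $\beta_M = 2$, the classical bound, while at $\omega = p^o_{\max} = 1/2$ it yields $\beta_M = 2\sqrt{2}$, the Tsirelson bound, so that the image of $[p^o_{\min}, p^o_{\max}]$ is exactly the interval $[2, 2\sqrt{2}]$ appearing in~\eqref{eqn:inequality_mabk_limits}.

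Substituting $\beta_M = 2\sqrt{2}\,(4\omega - 1)$ into~\eqref{eqn:max_tradeoff_interm} then finishes the argument. Since $\beta_M/(4\sqrt{2}) = (4\omega - 1)/2$, the right-hand side of~\eqref{eqn:max_tradeoff_interm} becomes $(1-\gamma)\bigl(2h_2(\tfrac{1}{2} - \tfrac{4\omega-1}{2}) - 1\bigr) = (1-\gamma)\,g_o(\omega)$, with $g_o$ as defined in~\eqref{eqn:bin_ent_odd}, which is precisely the claimed bound.

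The main (and essentially only) obstacle is establishing the correct odd-$M$ affine relation $\beta_M = 2\sqrt{2}\,(4\omega-1)$ from the structure of the MABK game, rather than inadvertently reusing the even-$M$ relation $\beta_M = 8\omega - 4$; the parity dependence of $p_{\min}$ and $p_{\max}$ is exactly what forces the two different slopes and intercepts. Once this relation is pinned down via the results of~\cite{ribeiro2017MABK} and verified against the endpoints above, the remainder is a one-line substitution identical in spirit to the even case handled in Theorem~\ref{thm:max_func_even}.
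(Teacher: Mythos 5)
Your proposal is correct and follows essentially the same route as the paper: the paper's proof likewise consists of substituting the odd-$M$ relation $\beta_M = 8\sqrt{2}\,\omega - 2\sqrt{2}$ (algebraically identical to your $\beta_M = 2\sqrt{2}\,(4\omega-1)$) into~\eqref{eqn:max_tradeoff_interm}, citing~\cite{ribeiro2017MABK}. Your endpoint checks against $p^o_{\min}$ and $p^o_{\max}$ are a sensible extra verification but do not change the argument.
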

    \begin{proof}
        When $M$ is odd, the MABK game winning probability and the MABK violation in~\eqref{eqn:inequality_mabk} are related as follows~\cite{ribeiro2017MABK}: $\beta_M=8\sqrt{2}\omega-2\sqrt{2}$. Substituting $\beta_M=8\sqrt{2}\omega-2\sqrt{2}$ in~\eqref{eqn:max_tradeoff_interm}, we get the required result.
    \end{proof}
    \medskip
    
    These theorems serve as the basis for our max-tradeoff function. Using Theorems~\ref{thm:max_func_even} and~\ref{thm:max_func_odd} to define a max-tradeoff function for all $p$ with $p(1)/\gamma\in [\,p_{\min},p_{\max}]$.
    Define a function~$f$ in the following fashion: when $M$ is even
    \begin{equation}\label{eqn:int_max_func_even}
        f(p,M)\coloneqq 
        \begin{cases}(1-\gamma) \cdot g_e\!\left(\frac{p(1)}{\gamma}\right) & \frac{p(1)}{\gamma} \in\left[0, p^e_{\max}\right] \\ \gamma-1 & \frac{p(1)}{\gamma} \in\left[p^e_{\max}, 1\right]\end{cases},
    \end{equation}
    and when $M$ is odd, 
    \begin{equation}\label{eqn:int_max_func_odd}
        f(p,M)\coloneqq 
        \begin{cases}(1-\gamma) \cdot g_o\!\left(\frac{p(1)}{\gamma}\right) & \frac{p(1)}{\gamma} \in\left[0, p^o_{\max}\right] \\ \gamma-1 & \frac{p(1)}{\gamma} \in\left[p^o_{\max}, 1\right]\end{cases},
    \end{equation}
    From Theorems~\ref{thm:max_func_even} and~\ref{thm:max_func_odd}, we know that any choice of $f_{\max}(p,M)$ that is differentiable and satisfies $f_{\max}(p,M)\geq f(p,M)$ for all $p$ will be a valid max-tradeoff function for our EAT channels. Since the derivatives of $f$ in~\eqref{eqn:int_max_func_even} and~\eqref{eqn:int_max_func_odd} at $p(1)/\gamma=p^e_{\max}$ and $p(1)/\gamma=p^o_{\max}$, respectively, are infinite, we now choose a function $f_{\max}(p,M)$ such that $\left\|\nabla f_{\max}(p,M)\right\|_{\infty}$ is finite. Let
    \begin{equation}\label{eqn:f_max}
        f_{\max}(p,p_t,M) \coloneqq  \begin{cases}f (p,M) & p(1)\leq p_t(1) \\ a(p_t)\cdot p(1) +b(p_t) & p(1)> p_t(1)\end{cases},
    \end{equation}
    where $p_t$ is a probability distribution over $\{0,1,\perp\}$, 
    \begin{align}
        a(p_t) & \coloneqq \left. \frac{\partial}{\partial p(1)}f(p,M)\right|_{p(1)=p_t(1)}, \text{   and}\\
        b(p_t) & \coloneqq f(p_t)-a(p_t)\cdot p_t(1).
    \end{align}
    
    It follows from the definition of $f_{\max}(p,M)$ that $f_{\max}(p,M)$ is differentiable and, for all $p_t$, $\left\|\nabla f_{\max}(\cdot,p_t,M)\right\|_{\infty}\leq a(p_t)$. Furthermore, as $f$ is a concave function, $f_{\max}(p,p_t,M)\geq f(p,M)$ for all $p$. Thus, $f_{\max}(p,p_t,M)$ is a max-tradeoff function. 
 
\subsection{Applying the EAT}

    We can finally apply the EAT, stated in Theorem~\ref{thm:eat}, to derive an upper bound on the conditional smooth max-entropy. By obtaining the exact form of $f_{\max}$, we have the calculated $t$ on the  right-hand side of~\eqref{eqn:h_max_eat}. We need to calculate~\eqref{eqn:h_max_eat_1}. We know that the dimension of all quantum registers involved is two, which means that $d_{O_{i}} =2M^\prime$. From~\eqref{eqn:f_max}, we get $\left\| \triangledown f_{\max}\right\|_{\infty} \leq a(p_t)$. According to Condition~\ref{cond:soundness} in Definition~\ref{def:dimec_protocol}, for soundness, we require that, for every source $\phi$ and measurement device, Protocol~\ref{pro:ent_test} either certifies a minimum amount of GHZ entanglement or  aborts with probability greater than $1-\varepsilon_{\operatorname{snd}}$ when applied on~$\sigma$. So, the probability of Protocol~\ref{pro:ent_test} not aborting is negligible i.e., $\Pr[\Omega]_{\tau} \leq \varepsilon_{\operatorname{snd}}$. Also, note that $\varepsilon_{\operatorname{smo}}$ is chosen such that the fidelity constraint in~\eqref{eqn:one_shot_distillable_ent} is satisfied. When Protocol~\ref{pro:ent_test_2} does not abort, we can define the following quantity, which corresponds to the right-hand side of~\eqref{eqn:h_max_eat}:
    \begin{multline}
        \eta(\omega_{\mathrm{exp}}\gamma - \delta_{\mathrm{est}}, p_t, \varepsilon_{\mathrm{smo}}, \varepsilon_{\operatorname{snd}},M^\prime,M)\coloneqq \\ n\cdot f_{\max }\!\left(\omega_{\mathrm{exp}}\gamma - \delta_{\mathrm{est}}, p_t,M\right)+
        2 \sqrt{n} \left(\log_2(1+2M^\prime)+\lceil a(p_t) \rceil \right)  \\
        \times \sqrt{1-2 \log_2 \left(\varepsilon_{\mathrm{smo}} \cdot \varepsilon_{\operatorname{snd}}\right)},
    \end{multline}
    where $\omega_{\mathrm{exp}}\gamma - \delta_{\mathrm{est}}$ is the minimum acceptable $p(1)$ as outlined in Step~\ref{prostep:abort_ET_2} of Protocol~\ref{pro:ent_test_2}. We still have one free variable $p_t$. We can optimize over $p_t$ to get the following equation.
    \begin{multline}\label{eqn:final_eta}
        \eta_{\mathrm{opt}}\left(\omega_{\mathrm{exp}}\gamma - \delta_{\mathrm{est}},\varepsilon_{\mathrm{smo}}, \varepsilon_{\operatorname{snd}},M^\prime,M\right) \coloneqq   
        \\
        \min _{p_t: p_{\min}<\frac{p_t(1)}{\gamma}<p_{\max}} \eta\left(\omega_{\mathrm{exp}}\gamma - \delta_{\mathrm{est}}, p_t, \varepsilon_{\mathrm{smo}}, \varepsilon_{\operatorname{snd}},M^\prime,M\right) .
    \end{multline}
    Finally, we can state the following:
    
    \begin{theorem}\label{thm:penult_thm_h_max}
        For every source and all measurement devices in the setting detailed earlier, let $\rho$ be the state generated using Protocol~\ref{pro:ent_test_2}, $\Omega$  the event that Protocol~\ref{pro:ent_test_2} does not abort, and $\rho_{|\Omega}$ the state conditioned on $\Omega$. Then, for all  $8\cdot3^{M/2}\sqrt{\varepsilon_{\operatorname{smo}}}, \varepsilon_{\operatorname{snd}}\in (0,1)$, either Protocol~\ref{pro:ent_test_2} aborts with probability greater than $1-\varepsilon_{\operatorname{snd}}$ or
        \begin{multline}
            H^{\varepsilon_{\mathrm{smo}}}_{\max}\!\left(\hat{A}_{[M^\prime],[n]} \middle\vert \hat{A}_{[M^\prime+1,M],[n]}\,(ADX)_{[M],[n]}\,T\right)_{\rho_{|\Omega}} \\ < \eta_{\mathrm{opt}}\!\left(\omega_{\mathrm{exp}},\varepsilon_{\mathrm{smo}}, \varepsilon_{\operatorname{snd}},M^\prime,M\right),
        \end{multline}
        where $\eta_{\mathrm{opt}}$ is defined in~\eqref{eqn:final_eta}.
    \end{theorem}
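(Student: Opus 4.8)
The plan is to derive the stated bound as a direct application of the entropy accumulation theorem (Theorem~\ref{thm:eat}) to the sequence of EAT channels realized by Protocol~\ref{pro:ent_test_2}, using the max-tradeoff function $f_{\max}(p,p_t,M)$ constructed in~\eqref{eqn:f_max}. First I would fix the identification of the abstract EAT systems with the registers of the protocol: for each round $j$, set the output system $O_j = \hat{A}_{[M^\prime],j}$, take $S_j$ to be $\hat{A}_{[M^\prime+1,M],j}\,(ADX)_{[M],j}\,T_j$, and let $W_j$ be the classical win register. The verification in Section~\ref{subsec:eat_channels} already shows that $\{\mathcal{N}_j\}_j$ are valid EAT channels in the sense of Definition~\ref{def:eat_channels}, while Theorems~\ref{thm:max_func_even} and~\ref{thm:max_func_odd}, together with the concave extension in~\eqref{eqn:f_max}, establish that $f_{\max}$ dominates the single-round conditional entropy uniformly over $\Sigma_p$ and is therefore a bona fide max-tradeoff function in the sense of Definition~\ref{def:max_tradeoff_func}.

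Next I would determine the constant $t$ appearing in Theorem~\ref{thm:eat}. Conditioning on the event $\Omega$ that Protocol~\ref{pro:ent_test_2} does not abort, the test in Step~\ref{prostep:abort_ET_2} forces the winning frequency to satisfy $\mathrm{freq}_w(1)\geq \omega_{\mathrm{exp}}\gamma-\delta_{\mathrm{est}}$ on every string $w$ with $\Pr[w]_{\rho_{|\Omega}}>0$. Since $g_e$ and $g_o$, and hence $f_{\max}$, are non-increasing in the winning probability $p(1)$ over the relevant interval, the supremum of $f_{\max}(\mathrm{freq}_w)$ over accepted $w$ is attained at the threshold, so I may take $t = f_{\max}(\omega_{\mathrm{exp}}\gamma-\delta_{\mathrm{est}},p_t,M)$. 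Feeding this $t$, together with the register dimension (each surviving $O_j$-register being a qubit after the Jordan-block projection in~\eqref{eqn:projected_state}) and the gradient bound $\left\|\nabla f_{\max}\right\|_{\infty}\leq a(p_t)$ guaranteed by~\eqref{eqn:f_max}, into the right-hand side $nt+v\sqrt{n}$ of~\eqref{eqn:h_max_eat}--\eqref{eqn:h_max_eat_1} reproduces exactly the quantity $\eta(\omega_{\mathrm{exp}}\gamma-\delta_{\mathrm{est}},p_t,\varepsilon_{\mathrm{smo}},\varepsilon_{\operatorname{snd}},M^\prime,M)$.

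Finally I would resolve the dichotomy in the statement by case analysis on $\Pr[\Omega]_{\tau}$. If $\Pr[\Omega]_{\tau}\leq \varepsilon_{\operatorname{snd}}$, then the protocol aborts with probability greater than $1-\varepsilon_{\operatorname{snd}}$ and the first alternative holds. Otherwise $\Pr[\Omega]_{\tau}>\varepsilon_{\operatorname{snd}}$; because $v$ in~\eqref{eqn:h_max_eat_1} is monotonically decreasing in $\Pr[\Omega]_{\tau}$ through the factor $\sqrt{1-2\log_2(\varepsilon_{\mathrm{smo}}\cdot\Pr[\Omega]_{\tau})}$, replacing $\Pr[\Omega]_{\tau}$ by the strictly smaller value $\varepsilon_{\operatorname{snd}}$ only enlarges the EAT bound, so $H^{\varepsilon_{\mathrm{smo}}}_{\max}(O|S)_{\rho_{|\Omega}} < \eta$ evaluated at $\Pr[\Omega]_{\tau}=\varepsilon_{\operatorname{snd}}$. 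Since the EAT bound holds for every admissible $p_t$, I would then minimize over $p_t$ as in~\eqref{eqn:final_eta} to obtain $\eta_{\mathrm{opt}}$, giving the second alternative. The main obstacle I anticipate is not the EAT invocation itself but the bookkeeping: verifying that $f_{\max}$ dominates the per-round conditional entropy uniformly over all $\sigma\in\Sigma_p$ and across the full accepted range, and getting the direction of the $\Pr[\Omega]_{\tau}$ monotonicity right so that the worst case genuinely coincides with the abort threshold, which is the subtle point on which the clean dichotomy rests.
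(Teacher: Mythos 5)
Your proposal is correct and follows essentially the same route as the paper's proof, which likewise establishes Theorem~\ref{thm:penult_thm_h_max} by citing the EAT-channel verification of Subsection~\ref{subsec:eat_channels} and the max-tradeoff function of Subsection~\ref{subsec:max_tradeoff}, then invoking Theorem~\ref{thm:eat} with $t$ fixed by the abort threshold $\omega_{\mathrm{exp}}\gamma-\delta_{\mathrm{est}}$ and a final minimization over $p_t$. If anything, your explicit dichotomy on $\Pr[\Omega]_{\tau}$ (using that $v$ is decreasing in $\Pr[\Omega]_{\tau}$, so substituting $\varepsilon_{\operatorname{snd}}$ when $\Pr[\Omega]_{\tau}>\varepsilon_{\operatorname{snd}}$ only enlarges the bound) is handled more carefully than the paper's brief remark preceding the theorem.
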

    
    \begin{proof}
        In Subsection~\ref{subsec:eat_channels}, we showed that Protocol~\ref{pro:ent_test_2} consists of EAT channels (defined in Definition~\ref{def:eat_channels}). In Subsection~\ref{subsec:max_tradeoff}, we derived the appropriate max-tradeoff function (as defined in Definition~\ref{def:max_tradeoff_func}). Hence, using Theorem~\ref{thm:eat}, the claim follows. 
    \end{proof}
    \medskip

    Now, we have all the ingredients to show that Protocol~\ref{pro:ent_test} is sound.
    
\subsection{Soundness}
    
    In this subsection, we show that Protocol~\ref{pro:ent_test} is sound. To do this, we need to bound the quantity in \eqref{eqn:one_shot_lower_bound} in a device-independent manner. We use the preceding theorems to obtain this bound and certify the multipartite entanglement distillation rate from a source. This brings us to our final theorem, which is as follows.
    \begin{theorem}
    \label{thm:main-result}
    Fix $\varepsilon_{\operatorname{smo}} > 0 $ such that $\varepsilon \coloneqq 8\cdot3^{M/2}\sqrt{\varepsilon_{\operatorname{smo}}}\in (0,1)$.
        For all $\varepsilon_{\operatorname{snd}},\varepsilon_{\mathrm{cmp}}, \varepsilon\in (0,1)$, Protocol~\ref{pro:ent_test} is a DIMEC protocol with
        \begin{enumerate}
            \item Noise tolerance (completeness): The probability that Protocol~\ref{pro:ent_test} aborts when applied on $\phi^{\operatorname{honest}}\in\mathcal{S}^{\operatorname{honest}}$ using a measurement device from $\mathcal{D}^{\operatorname{honest}}$ is at most $\varepsilon_{\operatorname{cmp}}\leq \exp\!\left(-2n\delta^{2}_{\operatorname{est}}\right)$, as shown in Theorem~\ref{thm:completeness}.
    
            \item Entanglement certification (soundness): For every source $\sigma$ and measurement device, either Protocol~\ref{pro:ent_test} aborts with probability greater than $1-\varepsilon_{\operatorname{snd}}$ when applied on $\sigma$ or 
            \begin{multline}
            E_{D}^{\varepsilon}(\rho_{\vert\Omega})\geq \\\frac{-\eta_{\mathrm{opt}}\left(\omega_{\mathrm{exp}},\varepsilon_{\operatorname{snd}}, \varepsilon_{\mathrm{smo}},M-1,M\right)}{M-1} +\frac{2\log\varepsilon_{\operatorname{smo}}}{M-1},
            \end{multline}
            where $\eta_{\mathrm{opt}}$ is defined in~\eqref{eqn:final_eta}.
        \end{enumerate}
    \end{theorem}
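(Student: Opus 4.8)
The plan is to assemble the final result from three pieces that are already in hand: the completeness bound of Theorem~\ref{thm:completeness}, the one-shot distillation lower bound of~\eqref{eqn:one_shot_lower_bound}, and the EAT-based max-entropy bound of Theorem~\ref{thm:penult_thm_h_max}. The completeness condition requires no new work: Theorem~\ref{thm:completeness} already certifies that the abort probability on any $\phi^{\operatorname{honest}}\in\mathcal{S}^{\operatorname{honest}}$ using a device from $\mathcal{D}^{\operatorname{honest}}$ is at most $\exp(-2n\delta^2_{\operatorname{est}})$, so I would simply invoke it to discharge condition~\ref{cond:completeness} with $\varepsilon_{\operatorname{cmp}}\leq\exp(-2n\delta^2_{\operatorname{est}})$.

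For soundness, I would apply the one-shot distillation bound~\eqref{eqn:one_shot_lower_bound} directly to the conditional state $\rho_{|\Omega}$, identifying the smoothing parameter as $\varepsilon'=\varepsilon_{\operatorname{smo}}$ so that the fidelity parameter becomes $\varepsilon=8\cdot 3^{M/2}\sqrt{\varepsilon_{\operatorname{smo}}}$. This reduces the task to upper bounding, for each nonempty $K\subseteq[M]\backslash k$, the conditional smooth max-entropy $H^{\varepsilon_{\operatorname{smo}}}_{\max}(\hat{A}_{K,[n]}\mid \hat{A}_{[M]\backslash K,\,[n]}\,(ADX)_{[M],[n]}\,T)_{\rho_{|\Omega}}$. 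Here the classical registers $(ADX)_{[M],[n]}$ and $T$ are side information held by the receiver, so including them in the conditioning is both legitimate for the merging protocol and matches precisely the quantity controlled by the EAT. Theorem~\ref{thm:penult_thm_h_max}, applied with $M'=|K|$, then supplies the needed dichotomy: either Protocol~\ref{pro:ent_test} aborts with probability greater than $1-\varepsilon_{\operatorname{snd}}$, or this entropy is strictly below $\eta_{\mathrm{opt}}(\ldots,|K|,M)$.

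The key step is to resolve the $\min_K$ and $\max_k$ in~\eqref{eqn:one_shot_lower_bound}. Substituting the EAT bound yields the per-subset lower bound $-\eta_{\mathrm{opt}}(\ldots,|K|,M)/|K|$. Inspecting $\eta_{\mathrm{opt}}$ from~\eqref{eqn:final_eta}, its leading term $n\,f_{\max}$ is independent of $M'$, and the only $M'$-dependence is the sub-leading $\sqrt{n}\,\log_2(1+2M')$ correction in the variance. Since a genuine MABK violation forces $f_{\max}<0$ (the binary-entropy expressions $g_e,g_o$ dip to $-1$ at maximal violation), the quantity $-n\,f_{\max}/|K|$ is positive and strictly decreasing in $|K|$; hence for large enough $n$ the minimum over $K$ is attained at the maximal subset $|K|=M-1$, i.e.\ $K=[M]\backslash k$. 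By the permutation symmetry of the MABK configuration, every $k$ gives the same value, so the outer $\max_k$ adds nothing. Collecting these observations produces exactly $E_D^{\varepsilon}(\rho_{|\Omega})\geq -\eta_{\mathrm{opt}}(\omega_{\mathrm{exp}},\varepsilon_{\operatorname{snd}},\varepsilon_{\operatorname{smo}},M-1,M)/(M-1)+2\log\varepsilon_{\operatorname{smo}}/(M-1)$, the claimed soundness bound.

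The hard part will be making the $\min_K$ argument rigorous \emph{uniformly} in the protocol parameters rather than only asymptotically in $n$: one must check that the $\log_2(1+2M')$ correction never reverses, over the finite range $1\leq|K|\leq M-1$, the ordering induced by the leading term, so that $|K|=M-1$ is genuinely the minimizer. The remaining delicate point is confirming that the smoothing and side-information manipulations preserve the max-entropy inequality in the direction needed (monotonicity under conditioning on the extra classical registers), and tracking the error budget so that $\varepsilon_{\operatorname{smo}}$, $\varepsilon_{\operatorname{snd}}$, and $\varepsilon$ stay mutually consistent across the distillation and EAT steps. Everything else is routine bookkeeping once these are in place.
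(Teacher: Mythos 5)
Your proposal is correct and takes essentially the same route as the paper: completeness is discharged by Theorem~\ref{thm:completeness}, and soundness by applying \eqref{eqn:one_shot_lower_bound} with $\varepsilon'=\varepsilon_{\operatorname{smo}}$ to $\rho_{|\Omega}$, invoking Theorem~\ref{thm:penult_thm_h_max} with $M'=|K|$, exploiting the permutation symmetry of the analysis over parties, and resolving the optimization at $|K|=M-1$. You are in fact more careful than the paper on the $\min_K$ step (the paper merely asserts that ``the optimizations involved can be solved since $\eta_{\mathrm{opt}}$ depends only on $|K|$''), and the issue you flag resolves uniformly in $n$ rather than only asymptotically: whenever $-\eta_{\mathrm{opt}}\!\left(\omega_{\mathrm{exp}},\varepsilon_{\operatorname{snd}},\varepsilon_{\operatorname{smo}},M-1,M\right)\geq 0$ (the only regime in which the claim is non-vacuous, since $E_{D}^{\varepsilon}\geq 0$ makes a negative lower bound trivial), the leading term dominates the $\log_2(1+2M')$ correction for every $|K|\leq M-1$, so $-\eta_{\mathrm{opt}}(\ldots,|K|,M)/|K|$ is decreasing in $|K|$ and $|K|=M-1$ is genuinely the minimizer.
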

    
    \begin{proof}
        Using Theorem~\ref{thm:penult_thm_h_max}, we get
        \begin{multline}
            -H^{\varepsilon_{\mathrm{smo}}}_{\max}\!\left(\hat{A}_{[M^\prime],[n]} \middle\vert \hat{A}_{[M^\prime+1,M],[n]}\,(ADX)_{[M],[n]}\,T\right)_{\rho_{|\Omega}} \\ \geq -\eta_{\mathrm{opt}}\left(\omega_{\mathrm{exp}},\varepsilon_{\mathrm{smo}}, \varepsilon_{\operatorname{snd}},M^\prime,M\right),
        \end{multline}
        Notice that all our prior analysis is agnostic to what $i\in[M]$ is assigned to which specific party involved. Also recall that $M^\prime\in[M-1]$. Hence, 
        \begin{multline}
            -H^{\varepsilon_{\mathrm{smo}}}_{\max}\!\left(\hat{A}_{K,[n]} \middle\vert \hat{A}_{[M]\backslash K,[n]}\,(ADX)_{[M],[n]}\,T\right)_{\rho_{|\Omega}} \\ \geq -\eta_{\mathrm{opt}}\left(\omega_{\mathrm{exp}},\varepsilon_{\mathrm{smo}}, \varepsilon_{\operatorname{snd}},M^\prime,M\right),
        \end{multline}
        where $K\subseteq [M]\backslash k$ for some $k\in [M]$. 
        
        This bound holds regardless of the individual elements that constitute $K$. We can then use the above inequality to obtain a lower bound on the quantity in~\eqref{eqn:one_shot_lower_bound}. Hence, we get
        \begin{widetext}
        \begin{multline}
        \max_{k\in [M]}\left\{\min_{K\subseteq [M]\backslash k}\frac{-H^{\varepsilon_{\mathrm{smo}}}_{\max}\!\left(\hat{A}_{K,[n]} \middle\vert \hat{A}_{[M]\backslash K,[n]}\,(ADX)_{[M],[n]}\,T\right)_{\rho_{|\Omega}}}{|K|}\right\}+\frac{2\log\varepsilon_{\operatorname{smo}}}{M-1} \\\geq \max_{k\in [M]}\left\{\min_{K\subseteq [M]\backslash k}\frac{-\eta_{\mathrm{opt}}\left(\omega_{\mathrm{exp}},\varepsilon_{\operatorname{snd}}, \varepsilon_{\mathrm{smo}},|K|,M\right)}{|K|}\right\} + \frac{2\log\varepsilon_{\operatorname{smo}}}{M-1}.
        \end{multline}
        \end{widetext}
        The optimizations involved can be solved since $\eta_{\operatorname{opt}}(\varepsilon_{\operatorname{snd}}, \varepsilon_{\mathrm{smo}},|K|,M)$ depends only on $|K|$, leading to
        \begin{multline}
        E_{D}^{\varepsilon}(\rho_{\vert\Omega})\\\geq \frac{-\eta_{\mathrm{opt}}\left(\omega_{\mathrm{exp}},\varepsilon_{\operatorname{snd}}, \varepsilon_{\mathrm{smo}},M-1,M\right)}{M-1} + \frac{2\log\varepsilon_{\operatorname{smo}}}{M-1},
        \end{multline}
        which concludes the proof.
    \end{proof}
    \medskip
    
    We plot also $-\eta_{\mathrm{opt}}(\omega_{\exp})/n(M-1)$ for $M=4$ and $\gamma =0.5$ up to leading order in Figure~\ref{fig:plot_even}.
    \begin{figure}[h]
        \centering
        \includegraphics[width=\linewidth]{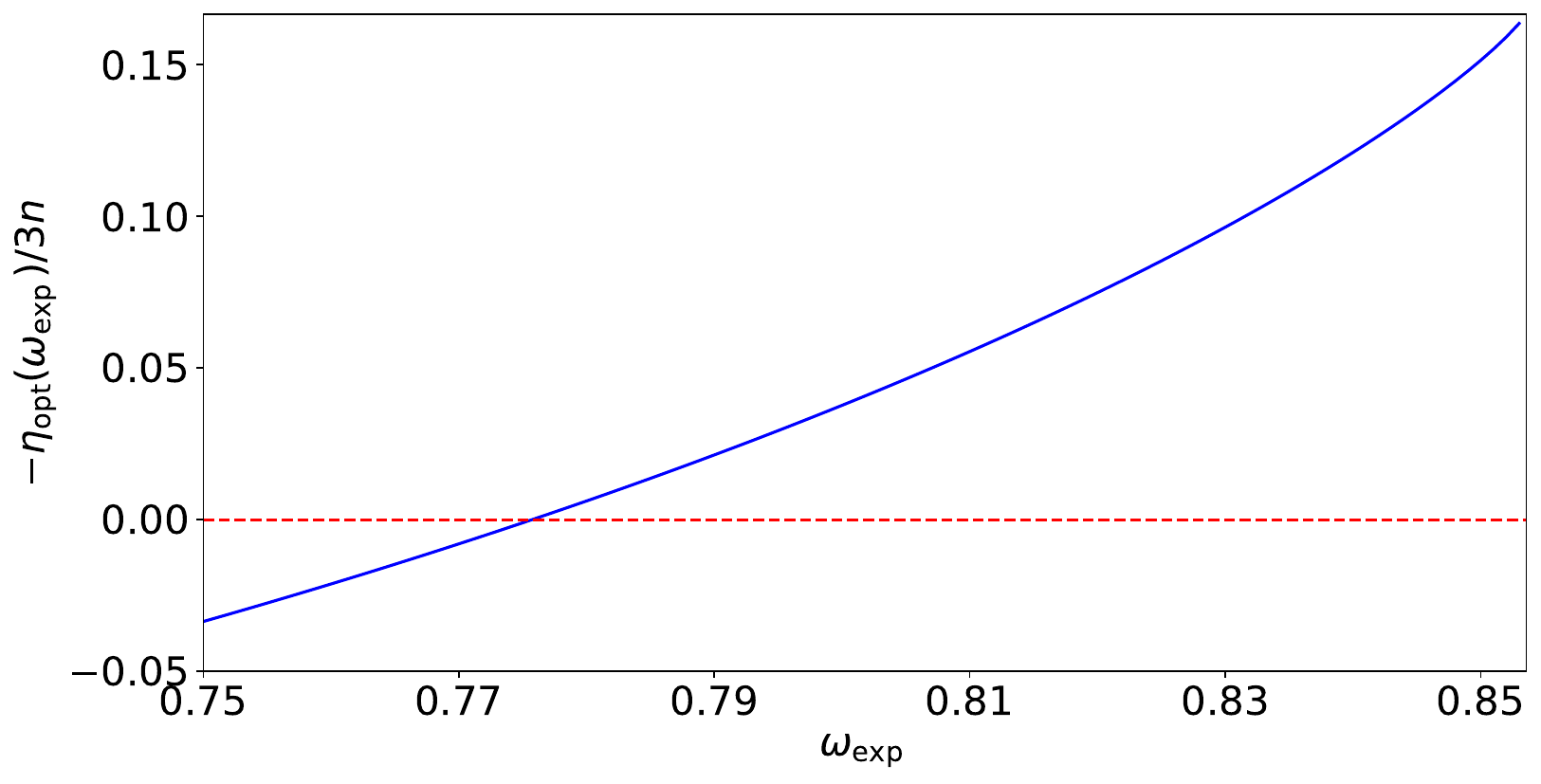}
        \caption{Plot of $-\eta_{\mathrm{opt}}(\omega_{\exp})/n(M-1)$, defined in~\eqref{eqn:final_eta}, for $M=4$ and $\gamma =0.5$ up to leading order in $n$ for $\omega_{\exp}\in[p^e_{\min},p^e_{\max}]$ defined in~\eqref{eqn:p_win_even}. The solid line corresponds to $-\eta_{\mathrm{opt}}(\omega_{\exp})/n(M-1)$. The dashed line corresponds to $-\eta_{\mathrm{opt}}/n(M-1)=0$. The quantity $-\eta_{\mathrm{opt}}(\omega_{\exp})/n(M-1)$ represents the amount of multipartite distillable entanglement that can be certified per round by Protocol~\ref{pro:ent_test} as function of $\omega_{\exp}$.}
        \label{fig:plot_even}
    \end{figure}
    
\section{Conclusion}

    In this work, we defined a DI, $M$-partite entanglement distillation certification protocol consisting of completeness and soundness conditions. We then proposed a protocol and showed that it is indeed a DI multipartite entanglement distillation certification protocol. Specifically, we proved that Protocol~\ref{pro:ent_test} is indeed a DI multipartite entanglement distillation certification protocol. Our proposed protocol is centered around the MABK inequality~\cite{PhysRevA.46.5375,PhysRevLett.65.1838,AVBelinskiĭ_1993}, which is critical to proving the completeness and soundness conditions. To prove the soundness condition, we used the entropy accumulation theorem~\cite{dupuis2016entropy}, the structure of the MABK~inequality~\cite{PhysRevA.46.5375,PhysRevLett.65.1838,AVBelinskiĭ_1993}, and the multipartite entanglement distillation protocols and rates described in~\cite{colomer2023decoupling,salek2023new}.

    For future works, we are interested in using other Bell-type inequalities that involve more than two binary measurements or measurements that have more than two outcomes. In this work, we have focused entirely on GHZ states and leaving out the W state. The W and GHZ states are not inter-convertible using LOCC alone~\cite{PhysRevA.62.062314}. Hence, our analysis here for GHZ states does not immediately apply to W states, but it is an interesting extension. It is also worth noting that there are infinitely many classes of genuinely multipartite entangled states that involve four or more parties~\cite{PhysRevLett.111.110502}. It will be interesting to see what a unified certification protocol for such classes of states might look like.

    We also note that our results are centered around a protocol based on state merging~\cite{colomer2023decoupling,salek2023new}. One could always consider another a protocol that does not involve state merging. An approach that does not rely on the equivalence between Bell inequality and MABK inequality may be necessary for this scenario.

\begin{acknowledgments}
    We thank Ashutosh Marwah for pointing us to \cite[Appendix~A]{9996821}. Part of this work was completed during the conference “Beyond IID in Information Theory,” held at the University of Illinois Urbana-Champaign from July 29 to August 2, 2024, and supported by NSF Grant No. 2409823. We thank Kaiyuan Ji, Dhrumil Patel, Vishal Singh, and Theshani Nuradha for helpful discussions. We acknowledge funding from Air Force Office of Scientific Research Award Nos.~FA9550-20-1-0067 and FA8750-23-2-0031. We also acknowledge support from the School of Electrical and Computer Engineering, Cornell University.

    This material is based on research sponsored by Air Force Research Laboratory under agreement number FA8750-23-2-0031. The U.S.~Government is authorized to reproduce and distribute reprints for Governmental purposes, notwithstanding any copyright notation thereon. The views and conclusions contained herein are those of the authors and should not be interpreted as necessarily representing the official policies or endorsements, either expressed or implied, of Air Force Research Laboratory or the U.S.~Government. 
\end{acknowledgments}

\bibliography{ref}

\appendix
\onecolumngrid

\section{Proof of Lemma~\ref{thm:mabk_2_bell}}\label{proof:mabk_2_bell}

\begin{theorem}
     An $M$-partite MABK inequality~\eqref{eqn:inequality_mabk} is equivalent to a CHSH inequality within any bipartition consisting of $M^\prime$ parties on  side and the $M -M^\prime$ other parties on the other side such that, for some $\hat{O}^{[M-M^\prime+1]}_0$ and $\hat{O}^{[M-M^\prime+1]}_1$, 
    \begin{equation}
        \mathcal{K}_M=\frac{1}{2}\mathcal{F}\!\left(\mathcal{K}_{M-M^\prime},\overline{\mathcal{K}}_{M-M^\prime},\hat{O}^{[M-M^\prime+1]}_0, \hat{O}^{[M-M^\prime+1]}_1\right).
    \end{equation}
\end{theorem}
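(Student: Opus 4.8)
The plan is to establish the stated operator identity by induction on the number $M'$ of parties one peels off the defining recursion~\eqref{eqn:def_mabk_operator}--\eqref{eqn:def_mabk_operator-2}, keeping $M$ fixed; the ``equivalence to CHSH'' is then read off directly from the form of $\mathcal{F}$. The inductive hypothesis is exactly the claim for a given $M'\in[M-1]$: there exist operators $\hat{O}^{[M-M'+1]}_0$, $\hat{O}^{[M-M'+1]}_1$ on the last $M'$ parties with $\mathcal{K}_M = \tfrac12\mathcal{F}(\mathcal{K}_{M-M'}, \overline{\mathcal{K}}_{M-M'}, \hat{O}^{[M-M'+1]}_0, \hat{O}^{[M-M'+1]}_1)$. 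The base case $M'=1$ is nothing but~\eqref{eqn:def_mabk_operator} itself, with the effective observables equal to the genuine binary observables $\hat{O}_0^M,\hat{O}_1^M$.

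For the inductive step, first I would expand the two $(M-M')$-party factors one level deeper. Writing $A_b\coloneqq\hat{O}^{[M-M'+1]}_b$ and $o_b\coloneqq\hat{O}_b^{M-M'}$, the crucial input is how the bar operation propagates: since $\overline{\mathcal{K}}_{M-M'}$ is $\mathcal{K}_{M-M'}$ with every pair $\hat{O}_0^i,\hat{O}_1^i$ exchanged, unrolling it interchanges $\mathcal{K}_{M-M'-1}\leftrightarrow\overline{\mathcal{K}}_{M-M'-1}$ and reverses the sign on its difference term, giving
\begin{align}
    \mathcal{K}_{M-M'} &= \tfrac12\!\left[\mathcal{K}_{M-M'-1}\otimes(o_0+o_1)+\overline{\mathcal{K}}_{M-M'-1}\otimes(o_0-o_1)\right],\\
    \overline{\mathcal{K}}_{M-M'} &= \tfrac12\!\left[\overline{\mathcal{K}}_{M-M'-1}\otimes(o_0+o_1)+\mathcal{K}_{M-M'-1}\otimes(o_1-o_0)\right].
\end{align}
Substituting into the inductive hypothesis and collecting the coefficients of $\mathcal{K}_{M-M'-1}$ and $\overline{\mathcal{K}}_{M-M'-1}$, I expect all cross terms to cancel, leaving exactly $\mathcal{K}_{M-M'-1}\otimes P' + \overline{\mathcal{K}}_{M-M'-1}\otimes Q'$ with $P'=\tfrac12(o_0 A_1 + o_1 A_0)$ and $Q'=\tfrac12(o_0 A_0 - o_1 A_1)$. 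Defining $\hat{O}^{[M-M']}_0\coloneqq P'+Q'$ and $\hat{O}^{[M-M']}_1\coloneqq P'-Q'$ then reproduces the $\tfrac12\mathcal{F}$ form at level $M'+1$, since $\hat{O}^{[M-M']}_0\pm\hat{O}^{[M-M']}_1$ recover $2P'$ and $2Q'$; this closes the induction.

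The main obstacle I anticipate is purely bookkeeping: one must pin down the exact sign pattern induced by the observable-exchange defining $\overline{\mathcal{K}}$, because it is precisely the cancellation of the ``$o_0A_0$'' and ``$o_1A_1$'' contributions to the $\mathcal{K}_{M-M'-1}$ coefficient (and of ``$o_0A_1+o_1A_0$'' in the $\overline{\mathcal{K}}_{M-M'-1}$ coefficient) that guarantees the expansion collapses to only the two operators $\mathcal{K}_{M-M'-1}$ and $\overline{\mathcal{K}}_{M-M'-1}$ on the smaller wing, rather than spawning additional independent operators. Once the identity holds, the CHSH equivalence is immediate from~\eqref{eqn:def_bell}: $\mathcal{K}_M=\tfrac12\mathcal{F}(\cdot)$ is literally the CHSH combination of two ``settings'' $\mathcal{K}_{M-M'},\overline{\mathcal{K}}_{M-M'}$ on the $(M-M')$-party wing and the two effective observables on the $M'$-party wing, exactly as $\mathcal{K}_2=\tfrac12\mathcal{F}(\hat{O}_0^1,\hat{O}_1^1,\hat{O}_0^2,\hat{O}_1^2)$ in~\eqref{eqn:def_mabk_operator-2}. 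Together with the CNOT-and-twirl reduction that makes the relevant control-qubit state Bell-diagonal, this is what justifies substituting the MABK violation $\beta_M$ for the CHSH violation $\beta_2$ in Lemma~\ref{thm:bell_diag}.
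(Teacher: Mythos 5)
Your induction follows the same route as the paper's Appendix~\ref{proof:mabk_2_bell} --- peel one party off the recursion~\eqref{eqn:def_mabk_operator}, expand both $\mathcal{K}_{M-M'}$ and $\overline{\mathcal{K}}_{M-M'}$ one level deeper, and collect the coefficients of $\mathcal{K}_{M-M'-1}$ and $\overline{\mathcal{K}}_{M-M'-1}$ --- but your bookkeeping is the correct one, and the paper's is not. Writing $o_b\coloneqq\hat{O}^{M-M'}_b$ and $A_b\coloneqq\hat{O}^{[M-M'+1]}_b$, the coefficient of $\overline{\mathcal{K}}_{M-M'-1}$ is $\tfrac12\left(o_0\otimes A_0-o_1\otimes A_1\right)$, exactly your $Q'$. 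In the paper's collection step the contribution $(o_0-o_1)\otimes(A_0+A_1)$ is miscopied as $(o_1-o_0)\otimes(A_0+A_1)$, which produces $\tfrac12\left(o_1\otimes A_0-o_0\otimes A_1\right)$ instead and leads to the incorrect conclusion that the effective observables can be taken to be the plain tensor products $\hat{O}^{M-1}_1\otimes\hat{O}^{M}_0$ and $\hat{O}^{M-1}_0\otimes\hat{O}^{M}_1$. The case $M=3$ adjudicates: unrolling gives the Mermin operator
\begin{equation}
\mathcal{K}_3=\tfrac12\left[\hat{O}^{1}_0\otimes\left(\hat{O}^{2}_0\otimes\hat{O}^{3}_1+\hat{O}^{2}_1\otimes\hat{O}^{3}_0\right)+\hat{O}^{1}_1\otimes\left(\hat{O}^{2}_0\otimes\hat{O}^{3}_0-\hat{O}^{2}_1\otimes\hat{O}^{3}_1\right)\right],
\end{equation}
whose $\hat{O}^{1}_1$ coefficient has your form, not the paper's. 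Your effective observables $P'\pm Q'$ are, at the first step, precisely $\mathcal{K}_2$ and $\overline{\mathcal{K}}_2$ on parties $M-1,M$, and inductively the MABK operators of the peeled-off wing; so what you prove is the standard splitting $\mathcal{K}_M=\tfrac12\mathcal{F}\!\left(\mathcal{K}_{M-M'},\overline{\mathcal{K}}_{M-M'},\mathcal{K}_{M'},\overline{\mathcal{K}}_{M'}\right)$. Since the theorem asserts the identity only ``for some'' $\hat{O}^{[M-M'+1]}_0,\hat{O}^{[M-M'+1]}_1$, your induction does establish the statement as written.

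One caution on your closing remark that the identity ``immediately'' licenses replacing $\beta_2$ by $\beta_M$ in Lemma~\ref{thm:bell_diag}: your (correct) effective observables are not dichotomic. They are Hermitian, but their operator norm can be as large as $2^{(M'-1)/2}$, whereas Lemma~\ref{thm:bell_diag} concerns a CHSH test with $\pm1$-valued observables. The paper's product-form observables would have been dichotomic --- presumably why that form was claimed --- but that form is wrong, so some further argument (a renormalization of the effective settings, or a reduction of the $M'$-party wing to qubit observables via Theorem~\ref{thm:jordan}) is needed before the entropy bound can be invoked with $\beta_M$ in place of $\beta_2$. This does not affect the correctness of your proof of the stated operator identity, but it is a genuine gap in the downstream use that both your write-up and the paper pass over.
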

\begin{proof}
    We know that, for all $M\geq 3$
    \begin{equation}\label{eqn:m}
        \mathcal{K}_M = \frac{1}{2}\mathcal{K}_{M-1}\otimes\left(\hat{O}^M_0 + \hat{O}^M_1 \right) + \frac{1}{2}\overline{\mathcal{K}}_{M-1}\otimes\left(\hat{O}^M_0 - \hat{O}^M_1 \right),
    \end{equation}
    and 
    \begin{equation}
        \overline{\mathcal{K}}_M = \frac{1}{2}\overline{\mathcal{K}}_{M-1}\otimes\left(\hat{O}^M_0 + \hat{O}^M_1 \right) + \frac{1}{2 }\mathcal{K}_{M-1}\otimes\left(\hat{O}^M_1 - \hat{O}^M_0 \right).
    \end{equation}
    It is then clear that
    \begin{equation}\label{eqn:m-1}
        \mathcal{K}_{M-1} = \frac{1}{2 }\mathcal{K}_{M-2}\otimes\left(\hat{O}^{M-1}_0 + \hat{O}^{M-1}_1 \right) + \frac{1}{2 }\overline{\mathcal{K}}_{M-2}\otimes\left(\hat{O}^{M-1}_0 - \hat{O}^{M-1}_1 \right),
    \end{equation}
    Let 
    \begin{equation}
        V_M \coloneqq \left|\operatorname{Tr}\!\left[\mathcal{K}_M\, \rho_{\hat{A}_{[M]}}\right]\right| = \frac{1}{2}\left|\operatorname{Tr}\!\left[\left\{\mathcal{K}_{M-1}\otimes\left(\hat{O}^M_0 + \hat{O}^M_1 \right) + \overline{\mathcal{K}}_{M-1}\otimes\left(\hat{O}^M_0 - \hat{O}^M_1 \right)\right\} \rho_{\hat{A}_{[M]}}\right]\right|.
    \end{equation}
    From \cite[Theorem 2]{ribeiro2017MABK}, we know be relabeling $\mathcal{K}_{M-1}$ as $\hat{O}_0^1$, $\overline{\mathcal{K}}_{M-1}$ as $\hat{O}_1^1$, $\hat{O}^{M}_0$ as $\hat{O}_0^2$,  and $\hat{O}^{M}_1 $ as $\hat{O}_1^2$, for $M^\prime = 1$
    \begin{align}
        V_M = \frac{1}{2}\left|\operatorname{Tr}\!\left[\mathcal{F}\left(\mathcal{K}_{M-1},\overline{\mathcal{K}}_{M-1},\hat{O}^M_0, \hat{O}^M_1 \right) \rho_{\hat{A}_{[M]}}\right]\right|.
    \end{align}
    We need to show the same for $1\leq M^\prime\leq M-1$. Let us start with $M^\prime=2$. By substituting~\eqref{eqn:m-1} in~\eqref{eqn:m}, we get
    \begin{align}
        \mathcal{K}_M &=   \frac{1}{4}\left(\mathcal{K}_{M-2}\otimes\left(\hat{O}^{M-1}_0 + \hat{O}^{M-1}_1 \right) + \overline{\mathcal{K}}_{M-2}\otimes\left(\hat{O}^{M-1}_0 - \hat{O}^{M-1}_1\right)\right) \otimes\left(\hat{O}^M_0 + \hat{O}^M_1 \right) \notag\\
        &\qquad + \frac{1}{4}\left(\overline{\mathcal{K}}_{M-2}\otimes\left(\hat{O}^{M-1}_0 + \hat{O}^{M-1}_1 \right) + \mathcal{K}_{M-2}\otimes\left(\hat{O}^{M-1}_1 - \hat{O}^{M-1}_0\right)\right) \otimes\left(\hat{O}^M_0 - \hat{O}^M_1 \right)\\
        &=   \frac{1}{4}\mathcal{K}_{M-2}\otimes\left(\left(\hat{O}^{M-1}_0 + \hat{O}^{M-1}_1 \right)\otimes\left(\hat{O}^{M}_0 + \hat{O}^{M}_1\right)+ \left(\hat{O}^{M-1}_1 - \hat{O}^{M-1}_0 \right)\otimes\left(\hat{O}^{M}_0 - \hat{O}^{M}_1\right)\right)\notag\\
        &\qquad+  \frac{1}{4}\overline{\mathcal{K}}_{M-2}\otimes\left(\left(\hat{O}^{M-1}_0 + \hat{O}^{M-1}_1 \right)\otimes\left(\hat{O}^{M}_0 - \hat{O}^{M}_1\right)+ \left(\hat{O}^{M-1}_1 - \hat{O}^{M-1}_0 \right)\otimes\left(\hat{O}^{M}_0 + \hat{O}^{M}_1\right)\right)\\
        &=  \frac{1}{2}\mathcal{K}_{M-2}\otimes\left(\hat{O}^{M-1}_0 \otimes \hat{O}^{M}_1 + \hat{O}^{M-1}_1 \otimes \hat{O}^{M}_0\right) +  \frac{1}{2}\overline{\mathcal{K}}_{M-2}\otimes\left(\hat{O}^{M-1}_1 \otimes \hat{O}^{M}_0 - \hat{O}^{M-1}_0 \otimes \hat{O}^{M}_1\right).
    \end{align}
    Relabelling $\hat{O}^{M-1}_1 \otimes \hat{O}^{M}_0$ as $\hat{O}_0^{[M-1]}$,  and $\hat{O}^{M-1}_0 \otimes \hat{O}^{M}_1 $ as $\hat{O}_1^{[M-1]}$, we get 
    \begin{align}\label{eqn:M_prime_2}
        V_M = \frac{1}{2}\left|\operatorname{Tr}\!\left[\mathcal{F}\left(\mathcal{K}_{M-2},\overline{\mathcal{K}}_{M-2},\hat{O}_0^{[M-1]}, \hat{O}_0^{[M-1]} \right) \rho_{\hat{A}_{[M]}}\right]\right|.
    \end{align}
    Now, we have shown that the proposition is true for $M^\prime=1$ and $M^\prime=2$. We can now use induction to show that our proposition holds for all $1\leq M^\prime\leq M-1$. Let assume that for some $M^\prime = m$, the following hold:
    \begin{equation}\label{eqn:M-m}
        \mathcal{K}_M = \frac{1}{2}\mathcal{K}_{M-m}\otimes\left(\hat{O}^{[M-m+1]}_0 + \hat{O}^{[M-m+1]}_1 \right) + \frac{1}{2}\overline{\mathcal{K}}_{M-m}\otimes\left(\hat{O}^{[M-m+1]}_0 - \hat{O}^{[M-m+1]}_1 \right)
    \end{equation} 
    and 
    \begin{align}
        V_M = \frac{1}{2}\left|\operatorname{Tr}\!\left[\mathcal{F}\left(\mathcal{K}_{M-m},\overline{\mathcal{K}}_{M-m}, \hat{O}^{[M-m+1]}_0,\hat{O}^{[M-m+1]}_1\right) \rho_{\hat{A}_{[M]}}\right]\right|.
    \end{align}
    where, $\mathcal{K}_{M-m}$ is defined in~\eqref{eqn:def_mabk_operator}, and $\hat{O}^{[M-m+1]}_0$ and $\hat{O}^{[M-m+1]}_1$ are observables similar to those in~\eqref{eqn:M_prime_2} involving parties $M-m+1,\ldots,M$. We know that 
    \begin{equation}\label{eqn:M-m-1}
        \mathcal{K}_{M-m} = \frac{1}{2}\mathcal{K}_{M-m-1}\otimes\left(\hat{O}^{M-m}_0 + \hat{O}^{M-m}_1 \right) + \frac{1}{2}\overline{\mathcal{K}}_{M-m}\otimes\left(\hat{O}^{M-m}_0 - \hat{O}^{M-m}_1 \right).
    \end{equation} 
    By substituting~\eqref{eqn:M-m-1} in~\eqref{eqn:M-m}, we get
    \begin{align}
        \mathcal{K}_M&= \frac{1}{4}\mathcal{K}_{M-(m+1)}\otimes\left(\left(\hat{O}^{M-m}_0 + \hat{O}^{M-m}_1 \right)\otimes\left(\hat{O}^{[M-m+1]}_0 + \hat{O}^{[M-m+1]}_1 \right)\right)\notag\\
        &\qquad \qquad+ \frac{1}{4}\mathcal{K}_{M-(m+1)}\otimes\left(\left(\hat{O}^{M-m}_0 - \hat{O}^{M-m}_1 \right)\otimes\left(\hat{O}^{[M-m+1]}_0 - \hat{O}^{[M-m+1]}_1 \right)\right)\notag\\
        &\qquad \qquad \qquad \qquad+ \frac{1}{4}\overline{\mathcal{K}}_{M-(m+1)}\otimes\left(\left(\hat{O}^{M-m}_0 + \hat{O}^{M-m}_1 \right)\otimes\left(\hat{O}^{[M-m+1]}_0 - \hat{O}^{[M-m+1]}_1 \right)\right)\notag\\ 
        &\qquad \qquad\qquad \qquad\qquad \qquad+ \frac{1}{4}\overline{\mathcal{K}}_{M-(m+1)}\otimes\left(\left(\hat{O}^{M-m}_0 - \hat{O}^{M-m}_1 \right)\otimes\left(\hat{O}^{[M-m+1]}_0 + \hat{O}^{[M-m+1]}_1 \right)\right)\\
        &=  \frac{1}{2}\mathcal{K}_{M-(m+1)}\otimes\left(\hat{O}^{M-m}_0 \otimes \hat{O}^{[M-m+1]}_1 + \hat{O}^{M-m}_1 \otimes \hat{O}^{[M-m+1]}_0\right) \notag\\
        &\qquad \qquad\qquad \qquad\qquad \qquad  +\frac{1}{2}\overline{\mathcal{K}}_{M-(m+1)}\otimes\left(\hat{O}^{M-m}_1 \otimes \hat{O}^{[M-m+1]}_0 - \hat{O}^{M-m}_0 \otimes \hat{O}^{[M-m+1]}_1\right).
    \end{align}
    Therefore,
    \begin{align}
        V_M = \frac{1}{2}\left|\operatorname{Tr}\!\left[\mathcal{F}\left(\mathcal{K}_{M-(m+1)},\overline{\mathcal{K}}_{M-(m+1)}, \hat{O}^{M-m}_1 \otimes \hat{O}^{[M-m+1]}_0, \hat{O}^{M-m}_0 \otimes \hat{O}^{[M-m+1]}_1\right) \rho_{\hat{A}_{[M]}}\right]\right|.
    \end{align}
    Hence, by strong induction, we have proven our proposition.
\end{proof}

\end{document}